\newtheorem{theorem}{Theorem}[section]
\newtheorem{proposition}[theorem]{Proposition}
\newtheorem{lemma}[theorem]{Lemma}
\newtheorem{corollary}[theorem]{Corollary}
\newtheorem{definition}[theorem]{Definition}
\newtheorem{remark}[theorem]{Remark}
\newtheorem{axiom}[theorem]{Axiom}
\title{One Plus One Equals Two Ones: On Identity, Aggregation, and Counting} 
\author{Souvik Ghosh \\ souvikghosh2012@gmail.com} 
\date{August 2025} 
\begin{document}

\maketitle 

\begin{abstract}
A childhood observation of Thakur Anukulchandra---that ``one and one can only be two ones, not simply two''---motivates a precise inquiry: what, exactly, is asserted when we pass from two concrete individuals to the numeral ``$2$''? This paper does not challenge the arithmetic theorem $1+1=2$, but rather analyzes what this equation means when applied to physical objects. We answer with two complementary, rigorous treatments.

\emph{Mathematician's proof.}
We model aggregation by the free commutative monoid of multisets $\mathcal{M}(U)$ over a universe of individuals $U$, so that $\delta_a+\delta_b$ literally encodes \emph{two ones} with individuality preserved.
Numerals arise only after a declared classification $q:U\to T$ (coarse--graining) via the pushforward $q_*:\mathcal{M}(U)\to\mathcal{M}(T)$ and the unique counting homomorphism to $\mathbb{N}$.
The non--injectivity of $q_*$ isolates the exact locus of information loss.

\emph{Physicist's proof.}
We represent physical systems by worldtubes, states, and observables, and define a composite operation $\oplus$ that preserves labelled constituents.
We prove the \emph{Non--Identity Addition Theorem}: $A\oplus B \equiv X\oplus X$ iff $A\equiv B\equiv X$; hence a pair of distinct objects cannot equal a doubled copy.
The numeral ``$2$'' appears only as the readout of a typed count observable after an explicit classification, not as a statement of physical identity.

\emph{Conclusion.}
In reality, one plus one is \emph{two ones}; ``$1+1=2$'' is the value of a counting map applied \emph{after} coarse--graining.
This clarifies the separation between identity--preserving aggregation and counting, reconciles everyday arithmetic with physical non--identity, and makes explicit the modeling choice that every act of counting entails. Our analysis has implications for the philosophy of mathematics, measurement theory, and information--theoretic approaches to classification.
\end{abstract}

\section*{Introduction} 
In a well-known episode from the childhood of Thakur Anukulchandra, a teacher asked the class, ``What is one plus one?'' The young Anukulchandra objected to the usual answer. ``One and one can only be two ones, not simply two,'' he said, arguing that no two concrete things---no two leaves, stones, or people---are exactly the same. The response was dismissed as naivety; yet the underlying idea is neither naive nor trivial. It raises a fundamental question that straddles mathematics, physics, and philosophy: What, precisely, are we doing when we add? More pointedly, what does ``$1+1=2$'' mean when the summands are non-identical physical individuals?

This paper formalizes that childhood intuition in two complementary ways. First, we show that within pure mathematics, addition as used in everyday counting is best understood as a two-stage process: (i) aggregation of distinct individuals in a structure that preserves identity, and (ii) coarse-graining (type-projection) followed by taking cardinality. On this view, ``one plus one'' at the level of individuals yields two ones---an aggregate with two unit contributions---whereas the numeral ``2'' appears only after we intentionally forget which individual is which. Second, we demonstrate that within physical theory, perfect identity of macroscopic individuals is neither demanded by our models nor supported by practice: physical combination preserves distinctness, and the familiar equation ``$1+1=2$'' arises only after an explicit classification of objects and a measurement that counts class membership.

\section*{Problem statement} 
Everyday discourse freely moves from ``one cow and one cow'' to ``two cows,'' as if the two cows were interchangeable tokens. But concrete objects rarely (if ever) satisfy identity in the strict sense of indiscernibility across all properties. If identity is not available in the world, how can addition---typically understood as pooling ``the same kind of thing''---be meaningfully applied to distinct individuals? Is the inference from ``one cow and another cow'' to ``two cows'' a logical truth, a mathematical theorem, a physical law, or a modeling convention?

\section*{Core thesis} 
We argue that, for concrete individuals, the faithful statement is:
\begin{itemize}
    \item Aggregation level (identity-preserving): ``one of A'' plus ``one of B'' is an aggregate with two ones, not a single entity of ``two of the same thing.''
    \item Counting level (after coarse-graining): choosing a type predicate (e.g., ``is a cow''), projecting the aggregate to that type, and then taking size yields the numeral ``2.'' Hence, the truth of ``$1+1=2$'' in applications is a fact about counts under a declared equivalence---not about identity of individuals.
\end{itemize}

\subsection*{Contributions} 
\begin{itemize}
    \item Mathematician's proof (identity-preserving algebra).

    We construct the free commutative monoid of multisets (bags) over a universe of individuals $U$. Unit aggregates $\delta_a$ and $\delta_b$ combine by pointwise sum to $\delta_a+\delta_b$, which literally encodes two ones without identifying $a$ and $b$. A type-projection $q:U\to T$ induces a pushforward $q_*$ on multisets; composing $q_*$ with the size map $|\cdot|$ yields the everyday count. We prove that $q_*$ is generally non-injective, precisely locating the information-loss step that turns two ones into the numeral ``2.''

    \item Physicist's proof (non-identity and counting as observable).

    We model a physical object as a bounded spacetime system with measurable properties and define a physical combination operation that forms a composite while preserving constituents. Using standard modeling assumptions (e.g., distinct worldlines, distinct macroscopic configurations), we show that two concrete objects need not, and typically do not, satisfy strict identity. We then formalize counting as an observable that reports how many elements of an aggregate satisfy a chosen classification; this observable is additive on disjoint aggregates. Thus the step from ``two ones'' to ``2'' is a measurement outcome after classification, not an assertion of sameness.

    \item Unification and clarification.

    We isolate the single move that everyday arithmetic hides: a deliberate coarse-graining (equivalence by type) that discards which individual contributed which unit. The mathematician's construction and the physicist's modeling yield the same operational picture: aggregation preserves individuality; counting is the map that forgets it.
\end{itemize}

\subsection*{Terminology and stance} 
\begin{itemize}
    \item \textbf{Aggregation} refers to forming a collection (multiset) in which individuals remain labeled and distinct.
    \item \textbf{Type-projection / coarse-graining} refers to mapping individuals to a classification (e.g., species, part number, bin), potentially many-to-one.
    \item \textbf{Counting} means taking the cardinality of the type-projected aggregate (mathematics) or reading the number observable associated with that classification (physics).
    \item \textbf{Approximation} in this paper does not mean ``sloppy.'' It is the intentional abstraction that treats non-identical individuals as equivalent for a stated purpose.
\end{itemize}

\subsection*{Scope and limitations} 
We do not deny the theorem ``$1+1=2$'' in Peano arithmetic; rather, we clarify what it measures in applications: the size of a coarse-grained aggregate. We focus on macroscopic individuality where labeling is meaningful; quantum indistinguishability and Fock-space number operators are briefly addressed in an appendix to avoid conflating mode counting with labeled individuals. The aim is conceptual and structural clarity, not empirical novelty.

\section{Philosophical Background} 

\subsection{Identity in Mathematics} 

The concept of \emph{identity} is central to both mathematics and philosophy.
In formal logic and set theory, identity is captured by the \emph{equality relation} $=$, which satisfies the following fundamental properties:

\begin{enumerate}
    \item \textbf{Reflexivity:} $\forall x,\; x = x$.
    \item \textbf{Symmetry:} $\forall x, y,\; x = y \Rightarrow y = x$.
    \item \textbf{Transitivity:} $\forall x, y, z,\; x = y \wedge y = z \Rightarrow x = z$.
\end{enumerate}

Beyond these logical properties, mathematics often invokes a stronger philosophical criterion known as \emph{Leibniz's Law}, or the \emph{Identity of Indiscernibles}:

\begin{equation}
\label{eq:Leibniz}
x = y \;\;\iff\;\; \forall P \big( P(x) \leftrightarrow P(y) \big),
\end{equation}
where $P$ ranges over all admissible properties.

Equation~\eqref{eq:Leibniz} asserts that two entities are identical if and only if they share \emph{all} properties.
In mathematics, this is unproblematic: two integers $3$ and $3$ have exactly the same arithmetical properties; two sets $A$ and $B$ are equal if they have exactly the same elements.

\medskip 

However, in \emph{applied} mathematics---when mathematical symbols are used to represent physical objects---equality almost never denotes \emph{full} identity in the Leibnizian sense.
Instead, it typically denotes \emph{equivalence under a chosen relation}.

\begin{definition}[Equivalence Relation] 
A binary relation $\sim$ on a set $X$ is called an \emph{equivalence relation} if it is reflexive, symmetric, and transitive.
The equivalence class of $x \in X$ is
\[
[x]_\sim := \{ y \in X \;|\; y \sim x \}.
\]
\end{definition}

In counting, for example, we implicitly select an equivalence relation ``is of the same kind as'' and then identify all members of a given equivalence class as interchangeable \emph{for the purposes of that count}.
This identification is not logical equality $=$, but \emph{categorical sameness} with respect to a specific classification.

\medskip 

\paragraph{Implication for Addition.} 
When we write
\[
1 + 1 = 2
\]
in a physical context (e.g., ``one cow plus one cow equals two cows''), we are not asserting that the two cows are \emph{identical in all respects}.
Rather, we are asserting that both belong to the same chosen equivalence class (``cow'') and that the \emph{cardinality} of that class, restricted to the given aggregate, is $2$.
This distinction between strict identity and equivalence-class membership will be essential when we contrast mathematical addition with physical aggregation in later sections.

\subsection{Identity in the Physical World}

While in pure mathematics the equality relation $=$ can be defined exactly
and applied without ambiguity, the situation in the physical world is
fundamentally different. No two macroscopic physical objects are ever
\emph{identical} in the full Leibnizian sense of sharing all properties.

\medskip

\paragraph{Spatiotemporal Distinction.}
Let a \emph{physical object} be represented as a bounded region of spacetime
endowed with measurable properties:
\[
O = \big( R_{\mathrm{space}},\, t,\, \{p_i\}_{i\in I} \big),
\]
where $R_{\mathrm{space}} \subset \mathbb{R}^3$ denotes its spatial extent at time $t$,
and $\{p_i\}_{i\in I}$ is the (in principle infinite) set of its intrinsic and extrinsic
properties: mass, charge distribution, velocity, internal state, and so forth.

Two distinct physical objects $A$ and $B$ necessarily differ in at least one such property.
For instance, at any given time slice $t_0$,
\[
R_{\mathrm{space}}(A, t_0) \neq R_{\mathrm{space}}(B, t_0),
\]
unless they occupy exactly the same region---a condition forbidden to
most physical systems by empirical laws such as the \emph{Pauli exclusion principle}
(for fermionic matter) and the \emph{impenetrability} of macroscopic bodies.

\medskip

\paragraph{Quantum State Distinction.}
In quantum mechanics, the state of an individual system is a vector
$|\psi\rangle$ in a Hilbert space $\mathcal{H}$.
Two distinct systems $A$ and $B$ will, in general, have distinct states
$|\psi_A\rangle \neq |\psi_B\rangle$, even if their macroscopic
classifications coincide (e.g., both are ``electrons'').
Exact equality of states is possible only under highly idealized conditions
and is never empirically verified to infinite precision due to
the \emph{Heisenberg uncertainty principle}:
\[
\Delta x \cdot \Delta p \ge \frac{\hbar}{2}.
\]

\medskip

\paragraph{Worldline Uniqueness.}
From the viewpoint of relativistic physics, each object traces a
\emph{worldline} in spacetime:
\[
W(O) = \{ ( \mathbf{r}(t), t ) \;|\; t \in [t_{\mathrm{birth}}, t_{\mathrm{now}}] \}.
\]
Even if two objects share identical positions and velocities at some instant,
their past and future trajectories will, in general, differ.
Thus, the histories of any two distinct objects are non-identical, implying
differences in at least the property ``path through spacetime.''

\medskip

\paragraph{Implication.}
These considerations lead to the following general statement:

\begin{quote}
    In the physical universe, no two distinct macroscopic objects are
    identical in the Leibnizian sense; they always differ in at least one
    measurable or historical property.
\end{quote}

Therefore, the act of treating two physical objects as ``the same'' for the
purpose of addition is not a discovery about the world itself, but rather a
\emph{modeling choice}:
we deliberately ignore certain distinctions and focus on a chosen subset of properties,
typically those relevant to the classification at hand (e.g., ``is a cow'').
This modeling step is precisely where \emph{coarse-graining} enters, and it
is the bridge between physical reality and the abstract arithmetic of the
natural numbers.

\subsection{Why the Distinction Matters}

The divergence between \emph{mathematical identity} and \emph{physical identity}
is not merely a matter of terminology; it has direct consequences for how
we interpret the operation of addition when applied to concrete objects.

\medskip

\paragraph{Addition in Pure Mathematics.}
In formal arithmetic (e.g., Peano arithmetic), the expression
\[
1 + 1 = 2
\]
is a theorem derived from axioms and definitions.
Here, the numerals $1$ and $2$ denote abstract entities in the set of natural
numbers $\mathbb{N}$, and ``$+$'' is the recursively defined addition
operation on $\mathbb{N}$.
The theorem is \emph{universally valid within the formal system}, because it
operates entirely in the realm of abstraction: there is no reference to
physical objects, their properties, or the question of whether two ``ones''
are identical.

\medskip

\paragraph{Addition in Applied Contexts.}
When we say, in everyday life, ``one cow plus one cow equals two cows,''
we are implicitly performing the following sequence:

\begin{enumerate}
    \item \textbf{Classification:} Select a property or predicate $P$,
    e.g., $P(x) =$ ``$x$ is a cow.''
    \item \textbf{Equivalence:} Declare two physical objects $a$ and $b$
    to be equivalent, $a \sim b$, if they both satisfy $P$.
    \item \textbf{Counting:} Determine the cardinality of the set (or multiset)
    of all objects in the context that satisfy $P$.
\end{enumerate}

This process \emph{collapses} potentially vast differences between
individuals---mass, colour, genetic sequence, history---into a single
equivalence class defined by $P$.
The step from \emph{aggregation} (``two distinct cows'') to the numeral
$2$ (``two cows'') is thus an act of \emph{coarse-graining}.

\medskip

\paragraph{Modeling Choice and Information Loss.}
Every coarse-graining discards information.
In the above example, the classification $P$ forgets all properties
of the cows except ``is a cow.''
Formally, if $\varphi$ denotes the projection map
from the set of individuals $U$ to the set of equivalence classes $U/{\sim}$,
then $\varphi$ is typically \emph{many-to-one}:
\[
\varphi(a) = \varphi(b) \quad \text{even though} \quad a \neq b.
\]
The non-injectivity of $\varphi$ is the precise location of \emph{information loss}.
This loss is deliberate in most applied settings, because it enables us
to treat disparate individuals as members of a single ``countable kind.''

\medskip

\paragraph{Consequences for Our Inquiry.}
The childhood insight of Thakur Anukulchandra---that ``one plus one is
two ones, not simply two''---is a recognition of the gap between
physical aggregation and mathematical counting.
It reminds us that:
\begin{itemize}
    \item Physical objects preserve individuality at all scales of observation;
    \item Mathematical addition of natural numbers implicitly assumes we are
          working with \emph{equivalence classes} of indistinguishable units;
    \item The transition from one to the other requires an explicit
          \emph{approximation function} or \emph{coarse-graining step}.
\end{itemize}

In the sections that follow, we will formalise this distinction in two
independent but complementary ways:
\begin{enumerate}
    \item A \emph{physicist's proof}, grounded in empirical principles,
          showing that perfect identity is unattainable for physical objects,
          and hence that addition in the physical world must preserve
          individuality.
    \item A \emph{mathematician's proof}, grounded in algebraic structures,
          showing that the familiar equation $1+1=2$ arises only after
          coarse-graining, with the information-loss step made explicit.
\end{enumerate}

\section{Related Work and Literature Context}

\subsection{Philosophy of Mathematics and Identity}
The distinction between numerical and qualitative identity has been extensively discussed in philosophical literature. Quine \cite{quine1960} distinguished between identity statements ``$a = b$'' and statements of indiscernibility, while Geach \cite{geach1967} argued for relative identity---that objects can be ``the same $F$'' without being absolutely identical. Our work formalizes this intuition in the specific context of counting and aggregation.

The problem of individuation in counting has been addressed by Frege \cite{frege1884} in his \emph{Grundlagen}, where he argued that number statements require a sortal concept. Our classification function $q:U\to T$ directly implements Frege's insight that we cannot count without specifying \emph{what} we are counting.

\subsection{Mereology and Part-Whole Relations}
The relationship between wholes and their parts has been formalized in mereological systems \cite{simons1987,varzi2016}. While classical mereology focuses on the part-whole relation, our approach maintains individual identity through multiset aggregation, avoiding the controversies of unrestricted composition while preserving the distinctness of constituents.

\subsection{Measurement Theory}
The foundations of measurement have been rigorously developed by Krantz et al. \cite{krantz1971} and Suppes et al. \cite{suppes1989}. Our typed counting observables $N_t$ can be viewed as a special case of extensive measurement, where the classification $q$ determines the measurement scale. The information loss we identify corresponds to what Roberts \cite{roberts1979} calls ``meaningful measurement''---the choice of what distinctions to preserve.

\subsection{Information Theory and Classification}
The information-theoretic aspects of classification have been studied in machine learning \cite{cover2006} and statistical physics \cite{jaynes2003}. Our forgetting map $\Phi_q$ implements what Tishby et al. \cite{tishby2000} call the ``information bottleneck''---a principled way to compress data while preserving relevant information for a task (in our case, counting).

\subsection{Quantum Indistinguishability}
The treatment of identical particles in quantum mechanics provides an interesting contrast to our classical analysis. As discussed by French and Redhead \cite{french1988} and Muller and Saunders \cite{muller2008}, quantum particles of the same type are genuinely indistinguishable, leading to different statistics (Fermi-Dirac or Bose-Einstein). Our framework applies to the classical regime where objects maintain distinguishability despite classification into types.

\subsection{Practical Implications and Examples}
\label{subsec:practical}

\paragraph{Scientific Measurement.}
Consider counting cells under a microscope. The classification $q$ might map cells to \{\texttt{healthy}, \texttt{damaged}, \texttt{dead}\}, discarding information about size, position, and internal state. Different research questions require different classifications, each yielding different counts from the same physical sample.

\paragraph{Data Science and Databases.}
SQL's \texttt{COUNT(*)} operation implements exactly our pipeline: rows (individuals) are grouped by specified columns (classification $q$), then counted. The \texttt{GROUP BY} clause is the coarse-graining step where information loss occurs.

\paragraph{Computer Vision.}
Object detection algorithms implement: detection $\to$ classification $\to$ counting. A neural network's softmax layer produces probability distributions over classes---a probabilistic version of our $q$. Performance metrics like precision/recall quantify errors in the classification function.

\paragraph{Economic Aggregation.}
GDP calculations aggregate diverse economic activities into a single number. The classification schemes (what counts as ``production'') are contentious precisely because different $q$ functions yield different counts, with profound policy implications.

\paragraph{Quantum Field Theory.}
Even in QFT, the number operator $\hat{N}$ counts excitations \emph{in specified modes}. The choice of mode basis (momentum, position, etc.) is analogous to our classification function, determining what gets counted as ``one particle.''

\section{Mathematician's Proof: Aggregation and Coarse-Graining}

\subsection{Preliminaries}
This section fixes the mathematical setting in which aggregation preserves individuality
and counting arises as a derived operation after coarse–graining.

\begin{definition}[Universe of individuals]
Let $U$ be a nonempty set whose elements (denoted $a,b,x,\dots$) represent
distinct individuals. No identifications are imposed a priori.
\end{definition}

\begin{definition}[Multisets (bags) over $U$]
A \emph{multiset} on $U$ is a function $m:U\to\mathbb{N}$ with finite support.
The value $m(x)$ is the \emph{multiplicity} of $x$ in $m$.
Denote by $\mathcal{M}(U)$ the set of all such multisets.
\end{definition}

\begin{definition}[Zero bag and unit bags]
The \emph{zero bag} $0\in\mathcal{M}(U)$ is defined by $0(x)=0$ for all $x\in U$.
For $a\in U$, the \emph{unit (singleton) bag} $\delta_a\in\mathcal{M}(U)$ is
$\delta_a(a)=1$ and $\delta_a(x)=0$ for $x\neq a$.
\end{definition}

\begin{definition}[Aggregation $+$ and size $|\cdot|$]
For $m_1,m_2\in\mathcal{M}(U)$, define \emph{aggregation} by pointwise sum:
\[
(m_1+m_2)(x):=m_1(x)+m_2(x)\qquad(x\in U).
\]
Define the \emph{size} (cardinality) of a bag by
\[
|m|:=\sum_{x\in U} m(x)\ \in\ \mathbb{N}.
\]
\end{definition}

\begin{proposition}[Free commutative monoid]
\label{prop:free-monoid}
$(\mathcal{M}(U),+,0)$ is a commutative monoid. Moreover, it is the free
commutative monoid generated by $U$: every element of $\mathcal{M}(U)$ is a
finite sum $\sum_{i=1}^k n_i\,\delta_{a_i}$ with $a_i\in U$, $n_i\in\mathbb{N}$.
\end{proposition}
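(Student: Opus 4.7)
The plan is to treat this as three routine verifications, since the construction of $\mathcal{M}(U)$ is set up so that every assertion reduces to a pointwise statement about $(\mathbb{N},+,0)$. The proposition bundles two claims: that $(\mathcal{M}(U),+,0)$ is a commutative monoid, and that it is \emph{free} on $U$ in the sense that each element admits a finite generator decomposition $\sum n_i\,\delta_{a_i}$.

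First I would check the monoid axioms. Since $+$ was defined by $(m_1+m_2)(x)=m_1(x)+m_2(x)$, commutativity, associativity, and the identity law $m+0=m$ each hold at every $x\in U$ by the corresponding law of $(\mathbb{N},+,0)$. Preservation of finite support is immediate because $\mathrm{supp}(m_1+m_2)\subseteq\mathrm{supp}(m_1)\cup\mathrm{supp}(m_2)$, so $\mathcal{M}(U)$ is closed under $+$.

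Second, for the generator decomposition, let $m\in\mathcal{M}(U)$ and enumerate its (finite) support as $\mathrm{supp}(m)=\{a_1,\dots,a_k\}$. Set $n_i:=m(a_i)\in\mathbb{N}_{\geq 1}$ and consider $m':=\sum_{i=1}^k n_i\,\delta_{a_i}$. Evaluating pointwise, $m'(a_j)=n_j=m(a_j)$ and $m'(x)=0=m(x)$ for $x\notin\mathrm{supp}(m)$, so $m=m'$. The degenerate case $m=0$ corresponds to the empty sum.

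Third, to justify the word \emph{free} I would verify the universal property: for any commutative monoid $(M,\star,e_M)$ and any map $f:U\to M$, define $\tilde f:\mathcal{M}(U)\to M$ by $\tilde f(m):=\prod_{a\in\mathrm{supp}(m)} f(a)^{\star\,m(a)}$, where $f(a)^{\star\,n}$ denotes the $n$-fold $\star$-product. Well-definedness is immediate because $m(a)$ is intrinsic to $m$ and $M$ is commutative, so the order of the product does not matter. One then checks $\tilde f(\delta_a)=f(a)$, $\tilde f(0)=e_M$, and $\tilde f(m_1+m_2)=\tilde f(m_1)\star\tilde f(m_2)$ by pointwise summing multiplicities; uniqueness holds since any homomorphism extending $f$ is forced on the generators $\delta_a$ and hence on their finite $+$-combinations. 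The only mildly subtle point, which I would state explicitly, is that the multiplicities $n_i=m(a_i)$ are canonical --- this is what rules out any spurious dependence on the enumeration of the support and makes $\tilde f$ genuinely a function rather than a relation.
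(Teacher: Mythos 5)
Your proposal is correct and, at the point where it overlaps with the paper, coincides with the paper's own argument --- but the comparison is slightly unusual because the paper states Proposition~\ref{prop:free-monoid} \emph{without any proof}, treating the monoid axioms and the generator decomposition as routine, and it defers the justification of the word \emph{free} to a separate result, Theorem~\ref{thm:free-universal} in Section~\ref{subsec:universal}, whose proof (existence of the extension via the generator formula, well-definedness from commutativity of $\oplus$, uniqueness because the $\delta_a$ generate and homomorphisms preserve sums) is essentially your third step. Your first two steps supply exactly the omitted routine verifications: the pointwise reduction of associativity, commutativity, and the identity law to $(\mathbb{N},+,0)$, closure under $+$ via $\mathrm{supp}(m_1+m_2)\subseteq\mathrm{supp}(m_1)\cup\mathrm{supp}(m_2)$, and the canonical decomposition $m=\sum_i m(a_i)\,\delta_{a_i}$ over the support, with $m=0$ as the empty sum. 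Your closing remark is also a genuine improvement in precision: the proposition's stated gloss on freeness (``every element is a finite sum of generators'') asserts only \emph{generation}, which by itself does not distinguish a free commutative monoid from a quotient of one; what rules out hidden relations is precisely the canonicity of the multiplicities $n_i=m(a_i)$ (equivalently, the universal property you verify). In effect you prove the proposition together with the paper's later Theorem~\ref{thm:free-universal} in one pass, which is logically sound and arguably the cleaner packaging.
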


\begin{remark}[Identity is preserved at the aggregation level]
\label{rem:identity-preserved}
If $a\neq b$ in $U$, then $\delta_a\neq\delta_b$ and
$\delta_a+\delta_b$ contains two distinct unit contributions:
$(\delta_a+\delta_b)(a)=1$, $(\delta_a+\delta_b)(b)=1$.
No identification of $a$ with $b$ is performed by $+$.
\end{remark}

\begin{definition}[Type set and type map (coarse–graining)]
Let $T$ be a nonempty set of \emph{types} (e.g.\ species, part numbers, bins).
A \emph{type map} (or \emph{coarse–graining}) is any function $q:U\to T$.
\end{definition}

\begin{definition}[Pushforward of bags]
Given $q:U\to T$ and $m\in\mathcal{M}(U)$, the \emph{pushforward}
$q_*m\in\mathcal{M}(T)$ is defined by
\[
(q_*m)(t)\ :=\ \sum_{x\in U:\ q(x)=t} m(x)\qquad(t\in T).
\]
\end{definition}

\begin{lemma}[Monoid homomorphism]
\label{lem:hom}
The map $q_*:(\mathcal{M}(U),+)\to(\mathcal{M}(T),+)$ is a commutative
monoid homomorphism: $q_*(m_1+m_2)=q_*m_1+q_*m_2$ and $q_*0=0$.
\end{lemma}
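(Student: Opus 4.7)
The plan is to verify the two monoid axioms directly from the definition of the pushforward, reducing each claim to elementary manipulations of finitely many nonnegative integers. The only piece that requires more than pure symbol pushing is the well-definedness step: one must confirm that $q_*m$ actually lies in $\mathcal{M}(T)$, i.e.\ that it has finite support whenever $m$ does. Once this is checked, both axioms fall out by pointwise comparison of the two sides, evaluated at an arbitrary $t\in T$.

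First I would dispatch the unit axiom $q_*0=0$ by pointwise evaluation: for every $t\in T$, the sum defining $(q_*0)(t)$ has every summand equal to $0(x)=0$, so the value is $0$, matching the zero bag on $T$. For additivity, I fix $m_1,m_2\in\mathcal{M}(U)$ and an arbitrary $t\in T$, and unfold the pushforward to obtain $(q_*(m_1+m_2))(t)=\sum_{x:\,q(x)=t}(m_1(x)+m_2(x))$. The set of indices $x$ with $q(x)=t$ that contribute nonzero terms is contained in the finite set $\mathrm{supp}(m_1)\cup\mathrm{supp}(m_2)$, so the sum is finite and splits as $\sum_{x:\,q(x)=t}m_1(x)+\sum_{x:\,q(x)=t}m_2(x)=(q_*m_1)(t)+(q_*m_2)(t)$. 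Because $t$ was arbitrary, the two multisets agree as functions $T\to\mathbb{N}$, i.e.\ $q_*(m_1+m_2)=q_*m_1+q_*m_2$.

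The single bookkeeping obligation—and the only step worth flagging as a potential snag rather than routine arithmetic—is the finite-support check. Observing that $(q_*m)(t)\neq 0$ forces at least one $x\in\mathrm{supp}(m)$ with $q(x)=t$, we get the containment $\mathrm{supp}(q_*m)\subseteq q(\mathrm{supp}(m))$. The right-hand side is the image of a finite set under $q$, hence finite, so $q_*m\in\mathcal{M}(T)$ as required. With well-definedness secured, no deeper tool than finite additivity of $\mathbb{N}$ enters the argument; there is no substantive obstacle, and the lemma is essentially a clean unfolding of the definitions of $+$ and $q_*$ from the preliminaries.
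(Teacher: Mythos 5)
Your proof is correct. The paper states Lemma~\ref{lem:hom} without proof, treating it as an immediate consequence of the definitions, and your pointwise verification is exactly the routine unfolding it takes for granted; your explicit finite-support check, via $\mathrm{supp}(q_*m)\subseteq q(\mathrm{supp}(m))$, is a worthwhile detail the paper silently elides, since it is what guarantees $q_*m\in\mathcal{M}(T)$ and that the defining sums make sense in $\mathbb{N}$.
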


\begin{definition}[Counting on $T$]
The \emph{count} on type-bags is the size map
$|\cdot|:\mathcal{M}(T)\to\mathbb{N}$ given by $|n|=\sum_{t\in T} n(t)$.
\end{definition}

\begin{remark}[Everyday counting as composition]
\label{rem:composition}
Given an aggregate $m\in\mathcal{M}(U)$ and a declared classification $q:U\to T$,
the everyday count of ``how many items of the chosen kinds'' is the composition
\[
\mathcal{M}(U)\ \xrightarrow{\ \ q_*\ \ }\ \mathcal{M}(T)\ \xrightarrow{\ |\cdot|\ }\ \mathbb{N}.
\]
The numeral produced depends on $q$ (the modeling choice), not on any collapse
of individual identities within $m$.
\end{remark}

\begin{remark}[Location of information loss]
\label{rem:info-loss}
If $q$ is not injective, then $q_*$ is not injective:
distinct aggregates in $\mathcal{M}(U)$ can push forward to the same bag in
$\mathcal{M}(T)$. Thus, the coarse–graining step $q_*$ is precisely where
individual-level information is discarded; the subsequent size map $|\cdot|$
only totals multiplicities on $T$.
\end{remark}

\subsection{Two-Ones Theorem and the Emergence of the Numeral \texorpdfstring{$2$}{2}}

We now prove, step by step, that aggregation preserves individuality (``two ones'')
and that the familiar numeral $2$ arises only after coarse–graining and counting.

\begin{theorem}[Two-Ones Theorem]\label{thm:two-ones}
Let $a,b\in U$ with $a\neq b$. Then the aggregate $\delta_a+\delta_b\in\mathcal{M}(U)$ satisfies
\[
(\delta_a+\delta_b)(a)=1,\qquad (\delta_a+\delta_b)(b)=1,\qquad
(\delta_a+\delta_b)(x)=0\ \ \forall x\notin\{a,b\},
\]
and its size is $|\delta_a+\delta_b|=2$.
\end{theorem}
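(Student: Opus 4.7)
The plan is to prove this by direct unfolding of the three relevant definitions (unit bag, pointwise aggregation, size), evaluating the function $\delta_a+\delta_b$ at each point of $U$ and then summing its multiplicities. No deep machinery is required; the role of the theorem is to make visible, symbol by symbol, what Remark~\ref{rem:identity-preserved} already asserts informally.

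First I would apply the definition of aggregation, $(\delta_a+\delta_b)(x)=\delta_a(x)+\delta_b(x)$, and then case-split on $x$. At $x=a$, the definition of the unit bag gives $\delta_a(a)=1$; crucially, since $a\neq b$ by hypothesis, $\delta_b(a)=0$, so the sum is $1$. The case $x=b$ is symmetric. For $x\notin\{a,b\}$, both $\delta_a(x)=0$ and $\delta_b(x)=0$, yielding $0$. This establishes the three pointwise equalities in the statement.

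For the size claim, I would invoke the definition $|m|=\sum_{x\in U}m(x)$ on $m=\delta_a+\delta_b$. Because the support of $\delta_a+\delta_b$ is contained in the finite set $\{a,b\}$, the (a priori infinite) sum reduces to
\[
|\delta_a+\delta_b|\;=\;(\delta_a+\delta_b)(a)+(\delta_a+\delta_b)(b)\;=\;1+1\;=\;2,
\]
where the last step is the base arithmetic fact in $\mathbb{N}$.

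The only step deserving any care is the use of the hypothesis $a\neq b$: without it one would obtain $\delta_a+\delta_a=2\,\delta_a$, whose multiplicity at $a$ is $2$ rather than a pair of distinct ones, and the ``two ones'' reading would collapse into ``one two.'' Thus the hypothesis $a\neq b$ is not cosmetic; it is exactly the feature that licenses the interpretation that justifies the theorem's name and that Remark~\ref{rem:info-loss} will later identify as what the pushforward $q_*$ may discard.
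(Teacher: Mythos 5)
Your proof is correct and follows essentially the same route as the paper's: pointwise evaluation of $\delta_a+\delta_b$ by case-split on $x\in\{a,b\}$ versus $x\notin\{a,b\}$, followed by reducing the size sum to the two-element support. Your explicit remark that the hypothesis $a\neq b$ is what forces $\delta_b(a)=0$ (and hence the ``two ones'' rather than ``one two'' reading) is a welcome clarification of a step the paper's proof uses only implicitly.
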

\begin{proof}
By the definition of unit bags, $\delta_a(a)=1$, $\delta_a(x)=0$ for $x\neq a$,
and analogously for $\delta_b$. By pointwise aggregation,
\[
(\delta_a+\delta_b)(a)=\delta_a(a)+\delta_b(a)=1+0=1,\quad
(\delta_a+\delta_b)(b)=\delta_a(b)+\delta_b(b)=0+1=1,
\]
and for $x\notin\{a,b\}$ both summands vanish, yielding $0$.
Finally, the size is
\[
|\delta_a+\delta_b|=\sum_{x\in U}(\delta_a+\delta_b)(x)=1+1=2.
\]
\end{proof}

\begin{proposition}[No identity collapse under aggregation]\label{prop:no-collapse}
If $a\neq b$, then for every $c\in U$ one has $\delta_a+\delta_b\neq 2\,\delta_c$.
\end{proposition}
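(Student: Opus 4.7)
The plan is to prove inequality of two multisets pointwise: by definition, elements of $\mathcal{M}(U)$ are functions $U\to\mathbb{N}$, so $\delta_a+\delta_b\neq 2\delta_c$ is established as soon as I exhibit a single $x\in U$ at which the two functions disagree. Theorem~\ref{thm:two-ones} already records the full profile of $\delta_a+\delta_b$ (value $1$ at $a$, value $1$ at $b$, value $0$ elsewhere), while directly from the definitions $2\delta_c$ takes the value $2$ at $c$ and the value $0$ at every other point. The task therefore reduces to choosing, for each possible position of $c$, a convenient witness $x$.

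I would split on three mutually exhaustive cases for the location of $c$. If $c=a$, evaluate at $x=a$: the left side gives $1$ while the right gives $2$. If $c=b$, evaluate at $x=b$, using $a\neq b$ to conclude that the left side equals $1$ and the right equals $2$. If $c\notin\{a,b\}$, evaluate at $x=c$: the left side gives $0$ (since $c$ is neither $a$ nor $b$) while the right side gives $2$. In every case the two functions differ at some input, so they are unequal as elements of $\mathcal{M}(U)$.

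An equivalent route is to invoke Proposition~\ref{prop:free-monoid}: in the free commutative monoid on $U$ every element has a unique normal form $\sum n_i\,\delta_{a_i}$ with distinct $a_i$ and positive $n_i$, so the supports $\{a,b\}$ and $\{c\}$, having different cardinalities, already force $\delta_a+\delta_b\neq 2\delta_c$. I do not anticipate any real obstacle here, since the statement is a bookkeeping corollary of Theorem~\ref{thm:two-ones}; the only decision is stylistic. I would present the explicit case split, because it keeps the argument at the level of pointwise multiplicities used in the preceding theorem and makes the precise locus of mismatch visible, reserving the freeness argument as a one-line remark.
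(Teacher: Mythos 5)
Your proof is correct and follows essentially the same route as the paper: exhibit a point of $U$ where the two functions disagree. The paper does it slightly more economically---evaluating both bags only at $x=a$ and splitting into just two cases, $c=a$ (values $1$ vs.\ $2$) and $c\neq a$ (values $1$ vs.\ $0$)---but this is a stylistic difference, and your three-case split (as well as your one-line freeness remark via distinct support cardinalities) is equally valid.
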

\begin{proof}
Evaluate both sides at $x=a$. If $c=a$, then
$(2\,\delta_c)(a)=2$, while $(\delta_a+\delta_b)(a)=1\neq 2$.
If $c\neq a$, then $(2\,\delta_c)(a)=0$, again $0\neq 1$.
In both cases the functions differ, hence the bags are unequal.
\end{proof}

\begin{definition}[Type instance]
Fix a type map $q:U\to T$ and $t_0\in T$. We say $a\in U$ is an instance of $t_0$
if $q(a)=t_0$.
\end{definition}

\begin{proposition}[Counting after coarse–graining yields the numeral $2$]
\label{prop:count-two}
Let $a,b\in U$ be distinct with $q(a)=q(b)=t_0\in T$. Then
\[
\big|\,q_*(\delta_a+\delta_b)\,\big|
=\big|\,q_*\delta_a+q_*\delta_b\,\big|
=\big|\,\delta_{t_0}+\delta_{t_0}\,\big|
=2.
\]
\end{proposition}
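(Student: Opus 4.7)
The plan is to verify the three asserted equalities in turn, each by direct unfolding of the definitions supplied in Section 4.1. The first equality, $q_*(\delta_a+\delta_b)=q_*\delta_a+q_*\delta_b$, is an immediate application of Lemma~\ref{lem:hom}: $q_*$ is a commutative monoid homomorphism, so it distributes over $+$. No further argument is needed at this step.

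The second equality rests on the auxiliary identification $q_*\delta_a=\delta_{q(a)}$ for every $a\in U$. I would prove this by unpacking the pushforward: $(q_*\delta_a)(t)=\sum_{x:\,q(x)=t}\delta_a(x)$, and since $\delta_a(x)$ vanishes except at $x=a$, the sum equals $1$ when $q(a)=t$ and $0$ otherwise---which is precisely $\delta_{q(a)}(t)$. Applying this with the hypothesis $q(a)=q(b)=t_0$ yields $q_*\delta_a=q_*\delta_b=\delta_{t_0}$, hence $q_*\delta_a+q_*\delta_b=\delta_{t_0}+\delta_{t_0}$. The third equality is then a one-line computation: by pointwise aggregation $(\delta_{t_0}+\delta_{t_0})(t_0)=2$ and $(\delta_{t_0}+\delta_{t_0})(t)=0$ for $t\neq t_0$, so the size map returns $2$ by its definition.

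There is no substantive mathematical obstacle here; the proof is a short bookkeeping exercise chaining the already-proved homomorphism property with unfolded definitions. The only place requiring care is the intermediate lemma $q_*\delta_a=\delta_{q(a)}$, which is the conceptual pivot of the whole construction: it exhibits exactly how the classification $q$ folds the distinct individuals $a$ and $b$ into the single type $t_0$, while the subsequent size map merely totals multiplicities. Thus the weight of the proposition lies not in the depth of its proof but in \emph{where} it locates the emergence of the numeral $2$: after---and only after---the non-injective coarse-graining step $q_*$ flagged in Remark~\ref{rem:info-loss}. This is the precise formal counterpart of the paper's thesis that ``two ones'' and ``the numeral $2$'' sit on opposite sides of a declared modeling choice.
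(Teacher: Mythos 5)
Your proof is correct and takes essentially the same route as the paper's: Lemma~\ref{lem:hom} for the first equality, the identification $q_*\delta_a=\delta_{q(a)}$ from the definition of the pushforward for the second, and totalling multiplicities under the size map for the third. The only difference is that you explicitly unfold the computation $(q_*\delta_a)(t)=\sum_{x:\,q(x)=t}\delta_a(x)$, which the paper's proof invokes in one phrase (``Definition of pushforward gives $q_*\delta_a=\delta_{t_0}$''); this is a harmless elaboration, not a different argument.
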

\begin{proof}
By Lemma~\ref{lem:hom}, $q_*(\delta_a+\delta_b)=q_*\delta_a+q_*\delta_b$.
Since $q(a)=q(b)=t_0$, Definition of pushforward gives $q_*\delta_a=\delta_{t_0}$ and
$q_*\delta_b=\delta_{t_0}$. The size map on $\mathcal{M}(T)$ totals multiplicities,
so $\big|\,\delta_{t_0}+\delta_{t_0}\,\big|=1+1=2$.
\end{proof}

\begin{proposition}[Information-loss under coarse–graining]\label{prop:info-loss}
If $q$ is not injective, then $q_*:\mathcal{M}(U)\to\mathcal{M}(T)$ is not injective.
\end{proposition}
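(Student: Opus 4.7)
The plan is to exhibit an explicit pair of distinct bags in $\mathcal{M}(U)$ whose pushforwards coincide, which immediately witnesses non-injectivity of $q_*$. The natural candidates are unit bags: non-injectivity of $q$ hands us exactly the data needed to build such a pair.

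First I would unpack the hypothesis: since $q:U\to T$ is not injective, there exist distinct $a,b\in U$ with $q(a)=q(b)$; call this common value $t_0\in T$. Next, I would verify that the unit bags $\delta_a$ and $\delta_b$ are distinct elements of $\mathcal{M}(U)$, which follows directly from evaluation at $a$: $\delta_a(a)=1$ while $\delta_b(a)=0$ because $a\neq b$. Then I would compute the two pushforwards from the definition: for every $t\in T$, $(q_*\delta_a)(t)=\sum_{x:\,q(x)=t}\delta_a(x)$, which equals $1$ when $t=t_0$ and $0$ otherwise, so $q_*\delta_a=\delta_{t_0}$; the same calculation with $b$ in place of $a$ yields $q_*\delta_b=\delta_{t_0}$. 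Hence $q_*\delta_a=q_*\delta_b$ while $\delta_a\neq\delta_b$, which is the required failure of injectivity.

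There is essentially no hard step here; the proposition is a direct unpacking of the definition of $q_*$. If anything counts as the crux, it is the recognition that unit bags (rather than some more elaborate construction) already suffice to witness collapse—this makes the information-loss locus of Remark~\ref{rem:info-loss} maximally concrete. I would also note, without proving it, the obvious strengthening: any two bags $m_1,m_2\in\mathcal{M}(U)$ supported on $q$-fibers over the same types with matching fiber-sums will collapse to the same pushforward, so the failure of injectivity is not just a single witnessed pair but a whole family of identifications indexed by the fibers of $q$.
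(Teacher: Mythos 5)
Your proof is correct, and it is marginally simpler than the paper's, though the overall strategy is the same: use non-injectivity of $q$ to produce two distinct bags with equal pushforwards. Where you differ is the choice of witness pair. You take the unit bags $\delta_a$ and $\delta_b$ themselves, with distinctness checked by a single evaluation at $a$ and both pushing forward to $\delta_{t_0}$; this needs nothing beyond the definitions. The paper instead takes $m=\delta_a+\delta_b$ and $m'=2\,\delta_a$, invoking Proposition~\ref{prop:no-collapse} for $m\neq m'$ and computing $q_*m = 2\,\delta_{t_0} = q_*m'$. Your pair is the more economical counterexample; the paper's pair is chosen for thematic rather than logical reasons: it exhibits the exact collapse the whole paper is about, namely that the identity-preserving aggregate ``two ones'' ($\delta_a+\delta_b$) and a literal ``double of one thing'' ($2\,\delta_a$) become indistinguishable precisely at the coarse-graining step, tying the non-injectivity of $q_*$ directly to the Two-Ones Theorem and Proposition~\ref{prop:no-collapse}. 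Your closing observation --- that any two bags with matching fiber-sums over each $t\in T$ have equal pushforwards, so the collapse is a whole family indexed by the fibers of $q$ --- is correct and in fact characterizes the kernel-like equivalence $q_*m_1=q_*m_2$ exactly; it is fine as an unproved aside, since the proposition only requires one witness.
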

\begin{proof}
Choose $a\neq b$ with $q(a)=q(b)$. Set $m:=\delta_a+\delta_b$ and $m':=2\,\delta_a$.
Then $m\neq m'$ by Proposition~\ref{prop:no-collapse}, but
\[
q_*m=q_*\delta_a+q_*\delta_b=\delta_{t_0}+\delta_{t_0}=2\,\delta_{t_0}
= q_*m',\quad t_0:=q(a)=q(b).
\]
Hence $q_*$ is not injective.
\end{proof}

\begin{corollary}[Everyday equation as a composite map]\label{cor:everyday}
For distinct $a,b\in U$ with $q(a)=q(b)$,
\[
\big(|\cdot|\circ q_*\big)\,(\delta_a+\delta_b)=2.
\]
Thus, the familiar string ``$1+1=2$'' in applied contexts computes the \emph{size}
of the \emph{type–pushed} aggregate; it is not an identification of individuals
within $\mathcal{M}(U)$.
\end{corollary}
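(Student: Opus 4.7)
The plan is to observe that the corollary is essentially a repackaging of Proposition~\ref{prop:count-two} in the language of function composition, so the proof should do nothing more than unfold the composite $|\cdot|\circ q_*$ and cite the earlier results. I would begin by fixing the hypotheses: distinct $a,b\in U$ with $q(a)=q(b)$, and set $t_0:=q(a)=q(b)\in T$ so that the common type is named.

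Next, I would evaluate the composite inside out. First apply $q_*$ to the aggregate $\delta_a+\delta_b$. Using Lemma~\ref{lem:hom} (that $q_*$ is a monoid homomorphism) and the definition of the pushforward on unit bags, I would rewrite
\[
q_*(\delta_a+\delta_b)=q_*\delta_a+q_*\delta_b=\delta_{t_0}+\delta_{t_0}=2\,\delta_{t_0}\ \in\ \mathcal{M}(T).
\]
Then I would apply the size map on $\mathcal{M}(T)$ and compute $|2\,\delta_{t_0}|=2$, which is exactly the conclusion of Proposition~\ref{prop:count-two}. Thus $(|\cdot|\circ q_*)(\delta_a+\delta_b)=2$.

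The second sentence of the corollary is an interpretive claim rather than a further calculation, so I would justify it by contrast: Theorem~\ref{thm:two-ones} and Proposition~\ref{prop:no-collapse} together show that inside $\mathcal{M}(U)$ the aggregate $\delta_a+\delta_b$ remains distinct from any doubled singleton $2\,\delta_c$, so the numeral $2$ produced above is not obtained by identifying $a$ with $b$ within $\mathcal{M}(U)$; it is produced only after the non-injective step $q_*$ (see Proposition~\ref{prop:info-loss}) collapses their types. This justifies reading ``$1+1=2$'' as the value of the composite $|\cdot|\circ q_*$ evaluated at a type-homogeneous aggregate, not as an identity statement about individuals.

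Since every ingredient—the homomorphism property of $q_*$, the pushforward of unit bags, and the additive size map—has already been established, there is no genuine obstacle; the only thing to be careful about is notational: making sure the corollary is stated at the level of the composite map and that the interpretive remark does not accidentally assert an equation inside $\mathcal{M}(U)$, which would contradict Proposition~\ref{prop:no-collapse}.
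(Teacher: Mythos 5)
Your proof is correct and matches the paper's own route: the paper proves this corollary simply by citing Proposition~\ref{prop:count-two}, and your unfolding of the composite via Lemma~\ref{lem:hom} and the pushforward of unit bags is exactly the content of that proposition. Your additional justification of the interpretive sentence via Proposition~\ref{prop:no-collapse} and Proposition~\ref{prop:info-loss} is consistent with the paper's surrounding remarks, though the paper itself leaves it implicit.
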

\begin{proof}
Immediate from Proposition~\ref{prop:count-two}.
\end{proof}

\begin{remark}[Universal property viewpoint]
The assignment $U\mapsto\mathcal{M}(U)$ is the free commutative-monoid functor.
Given $q:U\to T$, the induced $q_*$ is the unique monoid homomorphism extending $q$.
Composing with the size map $|\cdot|:\mathcal{M}(T)\to\mathbb{N}$ produces the unique
monoid homomorphism that ``forgets individuality then counts,'' formalising the idea
that numerals arise only after coarse–graining.
\end{remark}

\subsection{Typed Counting and the ``Apples--to--Oranges'' Principle}
\label{subsec:typed-counting}

The framework above separates \emph{aggregation} in $\mathcal{M}(U)$ from
\emph{counting after classification}. We now make precise how the choice of
type determines what can be added, formalising the everyday dictum that one
cannot (without further abstraction) ``add apples to oranges.''

\begin{definition}[Typed count]
Let $q:U\to T$ be a type map and let $S\subseteq T$ be a designated set of
types (e.g.\ $S=\{\texttt{cow}\}$ or $S=\{\texttt{cow},\texttt{dog}\}$).
Define the \emph{typed count} $\chi_S:\mathcal{M}(T)\to\mathbb{N}$ by
\[
\chi_S(n)\ :=\ \sum_{t\in S} n(t),\qquad n\in\mathcal{M}(T).
\]
Equivalently, $\chi_S$ is the unique commutative-monoid homomorphism sending
$\delta_t\mapsto 1$ for $t\in S$ and $\delta_t\mapsto 0$ for $t\notin S$.
\end{definition}

\begin{lemma}[Homomorphism property]
\label{lem:chi-hom}
For all $n_1,n_2\in\mathcal{M}(T)$ one has
$\chi_S(n_1+n_2)=\chi_S(n_1)+\chi_S(n_2)$ and $\chi_S(0)=0$.
\end{lemma}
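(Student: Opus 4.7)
The plan is to prove both required identities by a direct computation that unpacks the definitions of $\chi_S$, pointwise aggregation on $\mathcal{M}(T)$, and the zero bag, while carefully justifying the interchange of the two finite sums involved.

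First I would dispose of the unit-preservation claim $\chi_S(0) = 0$. By definition of the zero bag, $0(t) = 0$ for every $t \in T$, so $\chi_S(0) = \sum_{t \in S} 0(t) = \sum_{t \in S} 0 = 0$. This step is essentially by inspection once the definitions are recalled.

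For additivity, I would start from $\chi_S(n_1 + n_2) = \sum_{t \in S} (n_1 + n_2)(t)$ and apply the definition of pointwise aggregation on $\mathcal{M}(T)$, giving $\sum_{t \in S} \bigl( n_1(t) + n_2(t) \bigr)$. Splitting this as two separate sums then yields $\sum_{t \in S} n_1(t) + \sum_{t \in S} n_2(t) = \chi_S(n_1) + \chi_S(n_2)$. Strictly speaking, the sum over $t \in S$ is an infinite sum when $S$ is infinite; however, since $n_1$ and $n_2$ have finite support, the sets $\{t \in S : n_1(t) \neq 0\}$ and $\{t \in S : n_2(t) \neq 0\}$ are finite, and the full sum reduces to a finite sum over $(\mathrm{supp}(n_1) \cup \mathrm{supp}(n_2)) \cap S$. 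On this finite index set the splitting of the sum is just the commutativity and associativity of addition in $\mathbb{N}$.

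The only step that requires any real care is the justification that the split of $\sum_{t\in S}(n_1(t)+n_2(t))$ into two separate sums is legitimate even when $S$ is not finite. I expect this to be the main (and essentially only) obstacle, and it is handled by the finite-support observation above. Once finiteness is invoked, the lemma collapses to the elementary fact that a finite sum of sums equals the sum of finite sums, completing both parts of the claim.
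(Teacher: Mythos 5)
Your proof is correct and follows essentially the same route as the paper, which simply asserts the result as ``immediate from linearity of the sum over $t\in S$ and the definition of $0$''; your computation is exactly the unpacking of that linearity claim. The finite-support observation you add to justify splitting the sum when $S$ is infinite is a welcome precision that the paper leaves implicit, but it does not constitute a different approach.
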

\begin{proof}
Immediate from linearity of the sum over $t\in S$ and the definition of $0$.
\end{proof}

\begin{definition}[Typed counting observable]
Given $q:U\to T$ and $S\subseteq T$, the corresponding \emph{typed counting}
of an aggregate $m\in\mathcal{M}(U)$ is the composite
\[
N_S(m)\ :=\ \big(\chi_S\circ q_*\big)(m)\ \in\ \mathbb{N}.
\]
Thus $N_S$ first coarse–grains $m$ onto $T$ and then totals only the
multiplicities in the selected types $S$.
\end{definition}

\begin{proposition}[Apples--to--oranges principle]
\label{prop:apples-oranges}
Let $a,b\in U$ with $q(a)=t_1$, $q(b)=t_2$. Then for $S=\{t_1\}$ and
$S'=\{t_2\}$ one has
\[
N_{\{t_1\}}(\delta_a+\delta_b)=1,\qquad
N_{\{t_2\}}(\delta_a+\delta_b)=1,
\]
and for $S''=\{t_1,t_2\}$ one has
\[
N_{\{t_1,t_2\}}(\delta_a+\delta_b)=2.
\]
In particular, $N_{\{t_1\}}(\delta_a+\delta_b)=2$ if and only if $t_1=t_2$.
\end{proposition}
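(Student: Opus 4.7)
The plan is to push the aggregate $\delta_a+\delta_b$ forward through $q_*$, recognize the result as $\delta_{t_1}+\delta_{t_2}$, and then apply $\chi_S$ via its homomorphism property. Each of the three equalities will then reduce to evaluating $\chi_S$ on unit bags, where by definition $\chi_S(\delta_t)=1$ if $t\in S$ and $0$ otherwise.

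First I would invoke Lemma~\ref{lem:hom} together with the identity $q_*\delta_x=\delta_{q(x)}$ (read off the definition of the pushforward) to rewrite $q_*(\delta_a+\delta_b)=\delta_{t_1}+\delta_{t_2}$. This reduces every $N_S$ evaluation to computing $\chi_S(\delta_{t_1}+\delta_{t_2})$ on $\mathcal{M}(T)$.

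Next I would apply Lemma~\ref{lem:chi-hom} to split $\chi_S(\delta_{t_1}+\delta_{t_2})=\chi_S(\delta_{t_1})+\chi_S(\delta_{t_2})$ and evaluate each term using the singleton characterization of $\chi_S$. For $S=\{t_1\}$ the first summand is $1$; the second is $1$ exactly when $t_2\in\{t_1\}$, i.e.\ when $t_2=t_1$, and $0$ otherwise. Under the implicit apples-to-oranges hypothesis $t_1\ne t_2$ this gives $1+0=1$, and the case $S'=\{t_2\}$ is symmetric. For $S''=\{t_1,t_2\}$ both summands equal $1$ regardless of whether $t_1=t_2$ (in the coincident case $\{t_1,t_2\}=\{t_1\}$ still contains both types), yielding $2$. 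For the \emph{iff} clause I would observe that the same computation shows $N_{\{t_1\}}(\delta_a+\delta_b)=1+\chi_{\{t_1\}}(\delta_{t_2})$, whose value is $2$ precisely when $t_2=t_1$; the converse is the direct calculation when $t_1=t_2$.

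I expect no substantive obstacle: once one notes that $q_*$ is a monoid homomorphism (Lemma~\ref{lem:hom}) and that $\chi_S$ is the indicator-extension of the set $S$ to a monoid map (Lemma~\ref{lem:chi-hom}), the proposition collapses to a one-line evaluation on unit bags in $\mathcal{M}(T)$. The only point requiring care is the degenerate case $t_1=t_2$, in which $\{t_1\}$ and $\{t_2\}$ coincide and the first two stated identities would read $2$ rather than $1$; this is exactly the content pinned down by the final iff, so the argument must either state the first two equalities under the generic hypothesis $t_1\ne t_2$ or interpret them implicitly in that regime.
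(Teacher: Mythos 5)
Your proof follows essentially the same route as the paper's: push $\delta_a+\delta_b$ forward to $\delta_{t_1}+\delta_{t_2}$ via Lemma~\ref{lem:hom} and then evaluate $\chi_S$ on unit bags (the paper does the evaluation by inspection rather than invoking Lemma~\ref{lem:chi-hom}, an immaterial difference), so the argument is correct and matches. Your closing caveat is also well taken: the paper's proof, like yours, tacitly assumes $t_1\neq t_2$ for the first two displayed equalities---when $t_1=t_2$ they would read $2$, exactly as the final \emph{iff} clause records---so your explicit flagging of the degenerate case is a small improvement over the paper's wording.
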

\begin{proof}
By Lemma~\ref{lem:hom}, $q_*(\delta_a+\delta_b)=\delta_{t_1}+\delta_{t_2}$.
Applying $\chi_{\{t_1\}}$ counts only the multiplicity at $t_1$, yielding $1$,
and similarly for $\{t_2\}$. For $\{t_1,t_2\}$ one totals both coordinates,
yielding $1+1=2$. The final claim follows immediately.
\end{proof}

\begin{corollary}[Typing governs addability]
\label{cor:typing-governs}
Without enlarging the selected type set $S$ to contain both $t_1$ and $t_2$,
one cannot assert a typed total of $2$ for $\delta_a+\delta_b$.
Equivalently, the equation
\[
\text{``one of type }t_1\text{''} \;+\; \text{``one of type }t_2\text{''}
\;=\; 2
\]
is meaningful only after choosing a coarser type system in which $t_1$ and
$t_2$ are identified (e.g.\ as ``\texttt{fruit}'' or ``\texttt{animal}'').
\end{corollary}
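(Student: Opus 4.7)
The plan is to treat the corollary as essentially a direct readout of Proposition~\ref{prop:apples-oranges}, split into two implications: first, that a typed total of $2$ forces $\{t_1,t_2\}\subseteq S$; second, that recovering the numeral $2$ when $t_1\neq t_2$ requires replacing $T$ by a coarser quotient in which $t_1$ and $t_2$ collapse.

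For the first assertion, I would begin by invoking Lemma~\ref{lem:hom} to write $q_*(\delta_a+\delta_b)=\delta_{t_1}+\delta_{t_2}$, and then apply $\chi_S$ using Lemma~\ref{lem:chi-hom}. Since $\chi_S(\delta_t)$ equals $1$ if $t\in S$ and $0$ otherwise, we obtain $N_S(\delta_a+\delta_b)=\mathbf{1}_{S}(t_1)+\mathbf{1}_{S}(t_2)$, a sum of two indicator values each in $\{0,1\}$. A trivial case analysis then shows $N_S(\delta_a+\delta_b)=2$ if and only if both $t_1\in S$ and $t_2\in S$, which is the contrapositive of the stated claim. Proposition~\ref{prop:apples-oranges} already records the three salient cases $S=\{t_1\}$, $S=\{t_2\}$, and $S=\{t_1,t_2\}$, so it suffices to cite it.

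For the \emph{equivalence} clause, the task is to exhibit that when $t_1\neq t_2$ the only way to make a typed count of $\delta_a+\delta_b$ return $2$ via a single-type selector is to pass to a coarser type system. I would construct this explicitly: let $\pi:T\to T'$ be the quotient map identifying $t_1$ and $t_2$ to a single class $t':=\pi(t_1)=\pi(t_2)$ and leaving every other class alone, and set $q':=\pi\circ q:U\to T'$. Then $q'(a)=q'(b)=t'$, and Proposition~\ref{prop:count-two} applied to $q'$ yields $N_{\{t'\}}(\delta_a+\delta_b)=|\delta_{t'}+\delta_{t'}|=2$. Conversely, if $q$ itself is retained and $S$ is a singleton, the previous paragraph shows the answer is at most $1$ unless $t_1=t_2$, so no single-type count on $T$ recovers $2$.

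I do not anticipate any substantive obstacle: the corollary is deductively one step away from Proposition~\ref{prop:apples-oranges}, and the only mild care needed is in the ``equivalently'' rephrasing, where one must distinguish enlarging $S$ within $T$ from the genuinely different move of coarsening $T$ to $T'$. Both are instances of the same underlying fact, namely that typed counting is a composition of a homomorphism with an indicator sum, and that merging $t_1$ with $t_2$ can be effected either by expanding the selector set on $T$ or by quotienting $T$ itself; the proof should state this parallelism to avoid any appearance that the two formulations are independent assertions.
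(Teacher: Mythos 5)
Your proposal is correct, and its first half coincides with the paper's own (one-line) proof: the paper disposes of the corollary by citing Proposition~\ref{prop:apples-oranges} with ``$S$ not containing both $t_1$ and $t_2$ versus $S$ that does,'' which is exactly your indicator computation $N_S(\delta_a+\delta_b)=\mathbf{1}_S(t_1)+\mathbf{1}_S(t_2)$ specialized to the three cases already recorded there (your general-$S$ formula is in fact mildly stronger, covering every $S\subseteq T$ at once, including the degenerate case $t_1=t_2$). Where you genuinely go beyond the paper is the \emph{equivalently} clause: the paper's proof does not actually construct the coarser type system, leaving that reading to be supported only later by Example~4 and the functoriality and transport identities \eqref{eq:functoriality}--\eqref{eq:transport} of \S\ref{subsec:functoriality}, whereas you exhibit the quotient $\pi:T\to T'$ identifying $t_1,t_2\mapsto t'$, set $q'=\pi\circ q$, and apply Proposition~\ref{prop:count-two} to get $N_{\{t'\}}(\delta_a+\delta_b)=2$, together with the converse observation that no singleton selector on the original $T$ can return $2$ when $t_1\neq t_2$. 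This buys a self-contained proof of both formulations and makes explicit the distinction---which the paper's terse proof blurs---between enlarging the selector set $S$ inside a fixed $T$ and quotienting $T$ itself; the paper's version buys brevity at the cost of implicitly forward-referencing the functoriality machinery. No gap in either direction: your two constructions are equivalent precisely by \eqref{eq:transport}, since $N_{\{t'\}}^{\,T'}=N_{\pi^{-1}(\{t'\})}^{\,T}=N_{\{t_1,t_2\}}^{\,T}$, a parallelism you correctly flag.
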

\begin{proof}
Directly from Proposition~\ref{prop:apples-oranges} by taking $S$ not containing
both $t_1$ and $t_2$ versus $S$ that does.
\end{proof}

\begin{remark}[Dimensional analysis analogy]
The role of the subset $S\subseteq T$ is analogous to \emph{units} in
dimensional analysis: one cannot add $1\,\mathrm{m}$ and $1\,\mathrm{s}$, and one
cannot count $1$ cow and $1$ dog as $2$ cows. Choosing a coarser type
(e.g.\ ``\texttt{animal}'') is analogous to converting to a common unit,
after which the total $2$ is well-defined as a typed count.
\end{remark}

\begin{remark}[Terminal coarse–graining as ordinary counting]
Ordinary ``just count the items'' corresponds to the terminal coarse–graining
$q_\bullet:U\to\{\bullet\}$ sending every individual to a single type $\bullet$,
followed by $\chi_{\{\bullet\}}$. Then for any aggregate $m$,
\[
N_{\{\bullet\}}(m)\ =\ \big(\chi_{\{\bullet\}}\circ (q_\bullet)_*\big)(m)
\ =\ |m|,
\]
the untyped cardinality of $m$. Thus everyday numerals arise from the most
coarse classification, which maximally forgets individuality.
\end{remark}

\subsection{Universal Characterisation of Counting}
\label{subsec:universal}

We now package the preceding constructions in their universal (category–theoretic)
form. This isolates counting as the \emph{unique} commutative–monoid homomorphic
way to pass from identity–preserving aggregates to numerals, once a typing has been
declared.

\begin{definition}[Generators and the unit map]
Let $\eta:U\to\mathcal{M}(U)$ be the \emph{unit (generator) map} sending
$\eta(a):=\delta_a$. By Proposition~\ref{prop:free-monoid}, the elements
$\{\delta_a:a\in U\}$ generate $\mathcal{M}(U)$ under $+$.
\end{definition}

\begin{theorem}[Free commutative–monoid universal property]
\label{thm:free-universal}
Let $(A,\oplus,0_A)$ be any commutative monoid and let $f:U\to A$ be any function.
There exists a unique commutative–monoid homomorphism
$F:\big(\mathcal{M}(U),+\big)\to(A,\oplus)$ such that
\[
F\circ \eta\;=\;f.
\]
Explicitly, for a finite presentation $m=\sum_{i=1}^k n_i\,\delta_{a_i}$ with
$a_i\in U$ and $n_i\in\mathbb{N}$,
\[
F(m)\ :=\ \underbrace{f(a_1)\oplus\cdots\oplus f(a_1)}_{n_1\ \text{times}}
\ \oplus\ \cdots\ \oplus\
\underbrace{f(a_k)\oplus\cdots\oplus f(a_k)}_{n_k\ \text{times}}.
\]
\end{theorem}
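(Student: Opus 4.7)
The plan is to establish both existence and uniqueness separately, with existence given by the explicit formula in the theorem statement and uniqueness following from the fact that $\{\delta_a : a \in U\}$ generates $\mathcal{M}(U)$.

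For existence, I would define $F$ directly: given $m \in \mathcal{M}(U)$, let $\mathrm{supp}(m) := \{a \in U : m(a) > 0\}$, which is finite by definition of $\mathcal{M}(U)$. Enumerate $\mathrm{supp}(m) = \{a_1, \dots, a_k\}$ and set
\[
F(m)\ :=\ \bigoplus_{i=1}^{k}\, m(a_i)\cdot f(a_i),
\]
where $n\cdot y := y\oplus\cdots\oplus y$ ($n$ copies, with the empty $\oplus$-sum equal to $0_A$). The first task is well-definedness: the value must not depend on the chosen enumeration of $\mathrm{supp}(m)$. This follows because $(A,\oplus)$ is commutative and associative, so any reordering of the finite $\oplus$-sum produces the same element. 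In particular $F(0)=0_A$, since the empty support yields the empty sum.

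Next I would verify the homomorphism property $F(m_1+m_2)=F(m_1)\oplus F(m_2)$. The key identity is $(n_1+n_2)\cdot y = (n_1\cdot y)\oplus(n_2\cdot y)$ in any commutative monoid, which follows by induction on $n_1$. Summing over $\mathrm{supp}(m_1)\cup\mathrm{supp}(m_2)$ and using $(m_1+m_2)(a)=m_1(a)+m_2(a)$, the result follows after regrouping terms via commutativity. The extension condition $F\circ\eta=f$ is immediate: $F(\delta_a)=1\cdot f(a)=f(a)$.

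For uniqueness, suppose $G:\mathcal{M}(U)\to A$ is any commutative–monoid homomorphism with $G\circ\eta=f$. Then $G(\delta_a)=f(a)$ for every $a\in U$. By Proposition~\ref{prop:free-monoid}, any $m\in\mathcal{M}(U)$ admits a finite presentation $m=\sum_{i=1}^k n_i\,\delta_{a_i}$. Applying $G$ and using the homomorphism property repeatedly (together with the fact that $G$ sends $n\cdot\delta_{a_i}=\delta_{a_i}+\cdots+\delta_{a_i}$ to $n\cdot f(a_i)$) gives $G(m)=\bigoplus_i n_i\cdot f(a_i)=F(m)$, so $G=F$. The main obstacle, though routine, is the bookkeeping in the well-definedness check: one must confirm that the explicit formula depends only on the function $m$ and not on any chosen enumeration of its support, which is precisely what commutativity and associativity of $\oplus$ are there to guarantee; once this is settled, both the homomorphism property and uniqueness fall out by induction on the size of the support.
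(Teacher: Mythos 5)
Your proposal is correct and follows essentially the same route as the paper's proof: existence via the explicit formula with well-definedness guaranteed by commutativity and associativity of $\oplus$, and uniqueness by evaluating any candidate homomorphism $G$ on the generators $\delta_a$ and extending through a finite presentation. Your version merely fills in details the paper leaves implicit (the support-based canonical definition, the identity $(n_1+n_2)\cdot y=(n_1\cdot y)\oplus(n_2\cdot y)$, and the inductions), which is a welcome but not substantively different elaboration.
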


\begin{proof}
\emph{Existence:} The displayed formula defines $F$ on generators and
extends additively; associativity and commutativity of $\oplus$ ensure
well-definedness independent of ordering. One checks $F(m_1+m_2)=F(m_1)\oplus F(m_2)$
and $F(0)=0_A$, so $F$ is a monoid homomorphism.
Moreover $F(\eta(a))=F(\delta_a)=f(a)$, hence $F\circ\eta=f$.

\emph{Uniqueness:} Any monoid homomorphism $G$ with $G\circ\eta=f$ must satisfy
$G(\sum_i n_i\delta_{a_i})=\bigoplus_i \bigoplus^{n_i} f(a_i)$ because the
$\delta_a$ generate and $G$ preserves sums. Hence $G=F$.
\end{proof}

\begin{corollary}[Pushforward as the universal extension of a type map]
\label{cor:qstar-universal}
Given a type map $q:U\to T$, view $\mathcal{M}(T)$ as a commutative monoid under $+$,
and define $f_q:U\to\mathcal{M}(T)$ by $f_q(a):=\delta_{q(a)}$.
Then the unique homomorphism $F:\mathcal{M}(U)\to\mathcal{M}(T)$ extending $f_q$
is exactly the pushforward $q_*$:
\[
F = q_*.
\]
\end{corollary}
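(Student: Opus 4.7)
The plan is to reduce the claim to the uniqueness clause of Theorem~\ref{thm:free-universal}, which already does the heavy lifting. Since that theorem produces a \emph{unique} commutative-monoid homomorphism $F:\mathcal{M}(U)\to\mathcal{M}(T)$ satisfying $F\circ\eta=f_q$, it suffices to exhibit $q_*$ as a second such homomorphism; the two must then coincide.

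First, I would instantiate Theorem~\ref{thm:free-universal} with $(A,\oplus,0_A)=(\mathcal{M}(T),+,0)$ and $f=f_q$, yielding the distinguished $F$. Second, I would invoke Lemma~\ref{lem:hom}, which already establishes that $q_*$ is a commutative-monoid homomorphism from $(\mathcal{M}(U),+)$ to $(\mathcal{M}(T),+)$, so no verification of additivity or preservation of $0$ is required here. Third, I would check the generator condition $q_*\circ\eta=f_q$ by a direct unwinding of definitions: for any $a\in U$ and $t\in T$, the pushforward formula gives
\[
(q_*\delta_a)(t)=\sum_{x\in U:\,q(x)=t}\delta_a(x),
\]
and the summand is $1$ precisely when $x=a$ and $q(a)=t$, and $0$ otherwise. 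Hence $(q_*\delta_a)(t)=1$ iff $t=q(a)$ and $0$ otherwise, so $q_*\delta_a=\delta_{q(a)}=f_q(a)$, as required.

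With these two facts in hand, $q_*$ is a commutative-monoid homomorphism satisfying $q_*\circ\eta=f_q$. By the uniqueness clause of Theorem~\ref{thm:free-universal}, $q_*=F$, which is the claim. The only potential obstacle is the generator check in step three, and even that is routine once the double sum collapses by the indicator condition $q(x)=t$. I would keep the write-up short, explicitly naming Lemma~\ref{lem:hom} and Theorem~\ref{thm:free-universal} so the reader sees the corollary as pure book-keeping on top of the universal property.
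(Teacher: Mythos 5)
Your proposal is correct and matches the paper's own proof: both verify that $q_*$ and the universal extension $F$ agree on generators (via $q_*\delta_a=\delta_{q(a)}=f_q(a)$), note that both are commutative-monoid homomorphisms, and conclude $F=q_*$ from the uniqueness clause of Theorem~\ref{thm:free-universal}. Your version merely spells out the pushforward computation and the citation of Lemma~\ref{lem:hom} more explicitly than the paper does, which is fine.
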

\begin{proof}
For each generator $\delta_a$, both $F$ and $q_*$ send it to $\delta_{q(a)}$,
and both are monoid homomorphisms; uniqueness in Theorem~\ref{thm:free-universal}
implies $F=q_*$.
\end{proof}

\begin{definition}[Typed numeral target]
Let $(\mathbb{N},+,0)$ be the additive commutative monoid of natural numbers.
For a chosen set of types $S\subseteq T$, define $g_S:T\to\mathbb{N}$ by
$g_S(t):=1$ if $t\in S$ and $g_S(t):=0$ otherwise.
\end{definition}

\begin{theorem}[Typed counting is the unique extension]
\label{thm:typed-unique}
With $q:U\to T$ and $S\subseteq T$ as above, consider the composite
$g_S\circ q:U\to\mathbb{N}$. The unique monoid homomorphism
$G:\mathcal{M}(U)\to\mathbb{N}$ extending $g_S\circ q$ is precisely
the typed counting map
\[
N_S\ =\ \chi_S\circ q_*.
\]
\end{theorem}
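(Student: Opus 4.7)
The plan is to deduce the theorem as a direct application of the free commutative–monoid universal property (Theorem~\ref{thm:free-universal}) to the function $g_S \circ q : U \to \mathbb{N}$. That theorem already guarantees the \emph{existence and uniqueness} of a monoid homomorphism $G : \mathcal{M}(U) \to \mathbb{N}$ with $G \circ \eta = g_S \circ q$; all we need to do is exhibit $N_S = \chi_S \circ q_*$ as a homomorphism that agrees with $g_S \circ q$ on the generators $\{\delta_a : a \in U\}$, and then invoke uniqueness to conclude $G = N_S$.

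First, I would verify that $N_S$ is a monoid homomorphism from $(\mathcal{M}(U),+,0)$ to $(\mathbb{N},+,0)$. This is immediate from the composition of two homomorphisms: $q_*$ is a commutative–monoid homomorphism by Lemma~\ref{lem:hom}, and $\chi_S$ is one by Lemma~\ref{lem:chi-hom}. So $N_S(m_1 + m_2) = \chi_S(q_*(m_1 + m_2)) = \chi_S(q_* m_1 + q_* m_2) = \chi_S(q_* m_1) + \chi_S(q_* m_2) = N_S(m_1) + N_S(m_2)$, and $N_S(0) = 0$.

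Next, I would check agreement on generators. For any $a \in U$, the pushforward sends $\delta_a$ to $\delta_{q(a)}$ (by Corollary~\ref{cor:qstar-universal}, or directly from the definition), and then $\chi_S$ evaluates $\delta_{q(a)}$ to $1$ if $q(a) \in S$ and $0$ otherwise; that is exactly the definition of $g_S(q(a))$. Hence $N_S(\eta(a)) = N_S(\delta_a) = g_S(q(a))$, i.e.\ $N_S \circ \eta = g_S \circ q$.

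Finally, the uniqueness clause of Theorem~\ref{thm:free-universal} applied to the target monoid $(\mathbb{N},+,0)$ and the function $g_S \circ q : U \to \mathbb{N}$ forces any such extension to coincide with $N_S$; in particular, $G = N_S$. There is essentially no obstacle here: once the universal property is in hand, the content of the theorem is purely bookkeeping --- the only step requiring care is making explicit that composition of two universal extensions (first to $\mathcal{M}(T)$ via $q_*$, then to $\mathbb{N}$ via $\chi_S$) yields the universal extension of the composite function, which is precisely what the uniqueness argument encodes.
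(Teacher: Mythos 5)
Your proposal is correct and matches the paper's own proof: both apply the universal property of Theorem~\ref{thm:free-universal} with target $(\mathbb{N},+,0)$ and $f=g_S\circ q$, verify that $\chi_S\circ q_*$ is a monoid homomorphism agreeing with $g_S\circ q$ on the generators $\delta_a$, and conclude by uniqueness. Your explicit citation of Lemmas~\ref{lem:hom} and~\ref{lem:chi-hom} for the homomorphism property merely spells out what the paper leaves implicit.
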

\begin{proof}
By Theorem~\ref{thm:free-universal} applied to $A=\mathbb{N}$ and $f=g_S\circ q$,
there exists a unique $G$ with $G\circ\eta=g_S\circ q$. On generators
$\delta_a$ we have $(\chi_S\circ q_*)(\delta_a)=\chi_S(\delta_{q(a)})=g_S(q(a))$,
and both $\chi_S\circ q_*$ and $G$ are monoid homomorphisms. Uniqueness yields
$G=\chi_S\circ q_*$.
\end{proof}

\begin{corollary}[Ordinary counting as terminal coarse–graining]
\label{cor:terminal}
Let $q_\bullet:U\to\{\bullet\}$ be the terminal type map sending every $a\in U$
to the single type $\bullet$, and take $S=\{\bullet\}$.
Then the unique extension $G:\mathcal{M}(U)\to\mathbb{N}$ of $g_S\circ q_\bullet$
is the size map $|\cdot|$, i.e.\ $G(m)=|m|$ for all $m\in\mathcal{M}(U)$.
\end{corollary}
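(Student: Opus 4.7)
The plan is to recognise this as a direct specialisation of Theorem~\ref{thm:typed-unique} (or equivalently of Theorem~\ref{thm:free-universal}) to the terminal type set, and then identify the resulting unique extension with the size map $|\cdot|$ by checking agreement on generators. Since everything on the table is a commutative-monoid homomorphism determined by its values on $\{\delta_a:a\in U\}$, the argument reduces to a uniqueness observation.

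First I would compute the composite $g_S\circ q_\bullet:U\to\mathbb{N}$ explicitly: since $q_\bullet(a)=\bullet\in S$ for every $a\in U$, this composite is the constant function with value $1$. By Theorem~\ref{thm:typed-unique} there is a unique monoid homomorphism $G:\mathcal{M}(U)\to\mathbb{N}$ extending it, and that homomorphism is $N_S=\chi_S\circ (q_\bullet)_*$.

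Next I would verify that the size map $|\cdot|:\mathcal{M}(U)\to\mathbb{N}$ is itself a commutative-monoid homomorphism (immediate from $|m_1+m_2|=\sum_x m_1(x)+m_2(x)=|m_1|+|m_2|$ and $|0|=0$) and that it sends every generator $\delta_a$ to $1$, hence agrees with $g_S\circ q_\bullet$ on the unit map $\eta$. By the uniqueness clause of the universal property, $G=|\cdot|$.

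There is essentially no obstacle: the only point requiring care is a bookkeeping check that the two candidate maps agree on generators, after which uniqueness does the rest. If I wanted a fully explicit verification without invoking uniqueness, I could instead expand $N_{\{\bullet\}}(m)=\chi_{\{\bullet\}}\big((q_\bullet)_*m\big)=((q_\bullet)_*m)(\bullet)=\sum_{x\in U}m(x)=|m|$, which is the calculation already highlighted in the \textbf{Terminal coarse–graining} remark; this gives a self-contained one-line proof that matches the universal-property argument.
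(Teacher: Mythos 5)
Your proposal is correct and takes essentially the same route as the paper: the paper's proof likewise specializes Theorem~\ref{thm:typed-unique} and computes $\big((q_\bullet)_*m\big)(\bullet)=\sum_{x\in U}m(x)=|m|$, which is exactly the explicit one-line verification you give at the end. Your primary variant---verifying that $|\cdot|$ is a monoid homomorphism sending each generator $\delta_a$ to $1$ and then invoking the uniqueness clause---is the same universal-property argument run in the opposite direction, so nothing of substance differs.
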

\begin{proof}
Here $q_{\bullet *}(m)$ is the bag on $\{\bullet\}$ whose value at $\bullet$
is $\sum_{x\in U} m(x)=|m|$. Applying $\chi_{\{\bullet\}}$ returns that value.
By Theorem~\ref{thm:typed-unique}, this composite is the unique extension, hence equals $|\cdot|$.
\end{proof}

\begin{corollary}[Emergence of the numeral $2$ is forced by universality]
\label{cor:two-forced}
For distinct $a,b\in U$ with $q(a)=q(b)\in S$, the value
$N_S(\delta_a+\delta_b)=2$ is determined uniquely by the universal property:
it is the unique commutative–monoid extension of the assignment
$a\mapsto 1$, $b\mapsto 1$ (and other generators to $0$ or $1$ according to $S$).
\end{corollary}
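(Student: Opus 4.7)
\textbf{Plan of proof for Corollary~\ref{cor:two-forced}.} The plan is to derive the value $2$ as the only possible output of any commutative--monoid homomorphism $G:\mathcal{M}(U)\to\mathbb{N}$ that agrees on generators with the assignment $a\mapsto 1$, $b\mapsto 1$. Nothing genuinely new has to be computed; the point is to display how the numeral is \emph{forced} by the universal property rather than stipulated.

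First I would invoke Theorem~\ref{thm:typed-unique} with the given $q$ and $S$ to identify $N_S$ with the unique monoid homomorphism extending $g_S\circ q:U\to\mathbb{N}$. Next I would evaluate the generator map at $a$ and $b$: by hypothesis $q(a)=q(b)\in S$, hence $(g_S\circ q)(a)=g_S(q(a))=1$ and similarly $(g_S\circ q)(b)=1$. Then I would apply the homomorphism property of $N_S$ (justified by Lemmas~\ref{lem:hom} and~\ref{lem:chi-hom}, or directly by Theorem~\ref{thm:typed-unique}) to compute
\[
N_S(\delta_a+\delta_b)\;=\;N_S(\delta_a)+N_S(\delta_b)\;=\;1+1\;=\;2.
\]

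To make the ``forced by universality'' claim explicit, I would close with a uniqueness remark: suppose $G:\mathcal{M}(U)\to\mathbb{N}$ is any commutative--monoid homomorphism whose restriction along $\eta$ agrees with $g_S\circ q$ on every generator (in particular $G(\delta_a)=G(\delta_b)=1$). Then Theorem~\ref{thm:free-universal} gives $G=N_S$, and the additive calculation above forces $G(\delta_a+\delta_b)=2$; no other assignment of a natural number to $\delta_a+\delta_b$ is compatible with monoid homomorphy and the chosen unit values.

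The main obstacle here is not technical but expository: the arithmetic is a one--line consequence of the preceding results, so the burden is to present the step in a way that highlights the conceptual content (that the numeral $2$ is the unique homomorphic extension, not an identification of $a$ with $b$) rather than letting the proof collapse into a tautology. I would therefore keep the proof short, cite Theorems~\ref{thm:free-universal} and~\ref{thm:typed-unique} explicitly, and close by noting that the same argument produces any $N_S(\sum_i n_i\delta_{a_i})$, with the emergence of $2$ being merely the smallest nontrivial instance.
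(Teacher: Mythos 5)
Your proposal is correct and follows essentially the same route as the paper, whose proof is the one-liner ``Immediate from Theorem~\ref{thm:typed-unique} and Proposition~\ref{prop:count-two}'': your generator evaluation $(g_S\circ q)(a)=(g_S\circ q)(b)=1$ and the additive computation $N_S(\delta_a+\delta_b)=1+1=2$ simply unfold the content of Proposition~\ref{prop:count-two}, while your closing uniqueness remark via Theorem~\ref{thm:free-universal} is exactly what Theorem~\ref{thm:typed-unique} packages. No gap; your version is just the expanded form of the paper's citation-style proof.
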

\begin{proof}
Immediate from Theorem~\ref{thm:typed-unique} and Proposition~\ref{prop:count-two}.
\end{proof}

\begin{remark}[Compatibility with Peano arithmetic]
Nothing here contradicts the Peano–arithmetic theorem $1+1=2$:
Corollary~\ref{cor:terminal} shows that the numeral arises from the unique
monoid homomorphism $|\cdot|:\mathcal{M}(U)\to\mathbb{N}$ after terminal
coarse–graining. The identity–preserving aggregate
$\delta_a+\delta_b$ remains distinct from $2\,\delta_c$ in $\mathcal{M}(U)$
(Proposition~\ref{prop:no-collapse}); only after applying the unique
counting homomorphism do both map to the numeral $2$.
\end{remark}

\subsection{Functoriality, Invariance, and Worked Examples}
\label{subsec:functoriality}

We collect structural consequences of the framework that will be used later and
also serve as sanity checks: equivariance under relabelling, functoriality of
pushforwards along successive typings, and explicit examples.

\paragraph{Relabelling invariance.}
Let $\sigma:U\to U$ be a bijection (a mere relabelling of individuals). Define
$\sigma_{\#}:\mathcal{M}(U)\to\mathcal{M}(U)$ by
\[
(\sigma_{\#}m)(y)\ :=\ m\big(\sigma^{-1}(y)\big)\qquad(y\in U).
\]
Then $\sigma_{\#}$ is a commutative–monoid automorphism:
\[
\sigma_{\#}(m_1+m_2)=\sigma_{\#}m_1+\sigma_{\#}m_2,\qquad
\sigma_{\#}(0)=0,\qquad \sigma_{\#}(\delta_a)=\delta_{\sigma(a)}.
\]
Moreover, $|\,\sigma_{\#}m\,|=|m|$.

\begin{proof}
Direct verification from definitions; on generators $\delta_a$ the identities are
immediate and extend additively; the size is preserved because summation over $U$
is invariant under bijection.
\end{proof}

\paragraph{Naturality with respect to typing.}
Suppose $q:U\to T$ is a type map and $\sigma:U\to U$ a bijection. Define
$q':=q\circ\sigma^{-1}:U\to T$. Then the following diagram commutes:

\paragraph{Naturality with respect to typing.}
Suppose $q:U\to T$ is a type map and $\sigma:U\to U$ a bijection.
Define $q' := q\circ\sigma^{-1}:U\to T$. Then the following diagram commutes:
\begin{center}
\begin{tikzpicture}[baseline=(current bounding box.center)]
  \node (MU)  at (0,1.2) {$\mathcal{M}(U)$};
  \node (MUT) at (0,0)   {$\mathcal{M}(U)$};
  \node (MT)  at (4,0.6) {$\mathcal{M}(T)$};
  \draw[->] (MU)  -- node[left]  {$\sigma_{\#}$} (MUT);
  \draw[->] (MU)  -- node[above] {$q_*'$}        (MT);
  \draw[->] (MUT) -- node[below] {$q_*$}         (MT);
\end{tikzpicture}
\end{center}
i.e.\ $\, q_*' \circ \sigma_{\#} = q_* \,$.

\begin{proof}
Check on generators: $q_*'(\sigma_{\#}\delta_a)=q_*'(\delta_{\sigma(a)})=\delta_{q'(\sigma(a))}
=\delta_{q(a)}=q_*(\delta_a)$, then extend by homomorphism properties.
\end{proof}

\paragraph{Functoriality of successive coarse–grainings.}
If $r:T\to T'$ is a further type map (a coarsening or refinement), then
\begin{equation}\label{eq:functoriality}
(r\circ q)_*\;=\; r_*\circ q_*:\ \mathcal{M}(U)\longrightarrow \mathcal{M}(T').
\end{equation}
\begin{proof}
On generators $\delta_a$,
$(r\circ q)_*(\delta_a)=\delta_{r(q(a))}=(r_*\circ q_*)(\delta_a)$; extend additively.
\end{proof}

\paragraph{Transport of typed counts along coarsenings.}
Let $S'\subseteq T'$ and define $S:=r^{-1}(S')\subseteq T$.
Then for all $m\in\mathcal{M}(U)$,
\begin{equation}\label{eq:transport}
\chi_{S'}\big(r_* (q_* m)\big)\;=\;\chi_{r^{-1}(S')}\big(q_* m\big).
\end{equation}
Equivalently,
\[
N_{S'}^{\,T'}(m)\ =\ N_{r^{-1}(S')}^{\,T}(m),
\]
where superscripts indicate the codomain of the intermediate bag.
\begin{proof}
By definition $\chi_{S'}(n')=\sum_{t'\in S'} n'(t')$. For $n=q_* m$,
\[
\chi_{S'}(r_* n)=\sum_{t'\in S'} \sum_{t:\ r(t)=t'} n(t)
=\sum_{t\in r^{-1}(S')} n(t)=\chi_{r^{-1}(S')}(n).
\]
\end{proof}

\paragraph{Monotonicity of typed counts.}
If $S\subseteq S'\subseteq T$, then for all $n\in\mathcal{M}(T)$,
$\chi_S(n)\le \chi_{S'}(n)$, and for all $m\in\mathcal{M}(U)$,
$N_S(m)\le N_{S'}(m)$.
\begin{proof}
Immediate from $\sum_{t\in S}\! n(t)\le \sum_{t\in S'}\! n(t)$ when $S\subseteq S'$.
\end{proof}

\paragraph{Worked examples.}
Fix $U=\{a,b,c\}$ and $T=\{\texttt{cow},\texttt{dog},\texttt{stone}\}$.

\begin{itemize}
\item \emph{Example 1 (two ones, different types).}
Let $q(a)=\texttt{cow}$ and $q(b)=\texttt{dog}$. Then
$q_*(\delta_a+\delta_b)=\delta_{\texttt{cow}}+\delta_{\texttt{dog}}$.
Hence $N_{\{\texttt{cow}\}}(\delta_a+\delta_b)=1$,
$N_{\{\texttt{dog}\}}(\delta_a+\delta_b)=1$,
and $N_{\{\texttt{cow},\texttt{dog}\}}(\delta_a+\delta_b)=2$
(Prop.~\ref{prop:apples-oranges}).

\item \emph{Example 2 (same type, emergence of $2$).}
Let $q(a)=q(b)=\texttt{cow}$. Then
$q_*(\delta_a+\delta_b)=\delta_{\texttt{cow}}+\delta_{\texttt{cow}}=2\,\delta_{\texttt{cow}}$,
so $N_{\{\texttt{cow}\}}(\delta_a+\delta_b)=2$
(Prop.~\ref{prop:count-two}).

\item \emph{Example 3 (terminal coarse–graining = ordinary count).}
Let $q_\bullet:U\to\{\bullet\}$ be constant. For $m=\delta_a+\delta_b+\delta_c$,
$(q_\bullet)_* m = 3\,\delta_{\bullet}$ and $N_{\{\bullet\}}(m)=3=|m|$
(Cor.~\ref{cor:terminal}).

\item \emph{Example 4 (coarsening types).}
Let $r:T\to T'$ merge \texttt{cow} and \texttt{dog} into \texttt{animal} and send
\texttt{stone} to \texttt{mineral}. If $m=\delta_a+\delta_b$ with
$q(a)=\texttt{cow}$, $q(b)=\texttt{dog}$, then by \eqref{eq:functoriality},
$(r\circ q)_* m = r_*(\delta_{\texttt{cow}}+\delta_{\texttt{dog}})
= 2\,\delta_{\texttt{animal}}$ and thus $N_{\{\texttt{animal}\}}(m)=2$,
while $N_{\{\texttt{cow}\}}(m)=1$ and $N_{\{\texttt{dog}\}}(m)=1$
(transport formula \eqref{eq:transport}).
\end{itemize}

\paragraph{Summary.}
Counting is invariant under relabelling, functorial under successive
coarse–grainings, and governed entirely by the declared type system.
At the aggregation level $\mathcal{M}(U)$, individuality is preserved
(\S\ref{thm:two-ones}, Prop.~\ref{prop:no-collapse}); numerals arise only
after applying the (unique) monoid homomorphisms determined by the chosen
typing (Thm.~\ref{thm:typed-unique}).

\begin{definition}[Physical identity relation]
For $A,B\in\mathfrak{U}$, we say $A$ and $B$ are \emph{physically identical},
written $A\equiv B$, if and only if the following hold simultaneously:
\begin{enumerate}
    \item \textbf{Worldtube equality:} $W_A=W_B$ as subsets of $\mathcal{M}$;
    equivalently, $R_A(t)=R_B(t)$ for all $t\in I_A\cup I_B$.
    \item \textbf{State equality:} $\rho_A=\rho_B$ (in the relevant
    dynamical/state space).
    \item \textbf{Observable equality:} $p_i(A)=p_i(B)$ for all $i\in\mathcal{I}$.
\end{enumerate}
This is a physical analogue of Leibniz’s criterion: equality across all
admissible kinematic, dynamical, and attribute data.
\end{definition}

\begin{remark}
The relation $\equiv$ refines mere set–theoretic equality by binding together
geometric support ($W_O$), dynamical description ($\rho_O$), and attribute
valuations ($p_i$). It is an equivalence relation on $\mathfrak{U}$.
\end{remark}

\begin{definition}[Composite system / physical combination]
Given $A,B\in\mathfrak{U}$ with (possibly overlapping) lifetimes, their
\emph{composite} is the pair
\[
A\oplus B\ :=\ \big(W_A\cup W_B,\ \rho_{A\cup B},\ \{p_i(A\cup B)\}_{i\in\mathcal{I}_{A\cup B}}\big),
\]
where $\rho_{A\cup B}$ is the joint state obtained by the appropriate
composition rule (direct product in classical mechanics, tensor product in
quantum mechanics), possibly evolved by an interaction term $H_{\mathrm{int}}$,
and where the observable family is enlarged to include interaction observables.
The operation $\oplus$ is \emph{not} numerical addition; it constructs a
composite physical system while retaining $A$ and $B$ as identifiable
constituents (unless a subsequent coarse–graining is applied).
\end{definition}

\begin{definition}[Finite aggregates of objects]
A \emph{finite aggregate} is a finite multiset
$\mathcal{A}=\{A_1,\dots,A_n\}\subset \mathfrak{U}$ of pairwise not
necessarily identical objects. Its instantaneous spatial support at time $t$
is $R_{\mathcal{A}}(t):=\bigcup_{k=1}^n R_{A_k}(t)$.
\end{definition}

\begin{definition}[Type predicates and classifications]
A \emph{type predicate} is a map $P:\mathfrak{U}\to\{0,1\}$. A
\emph{classification} (or \emph{type map}) is a function
$q:\mathfrak{U}\to T$ into a discrete set $T$ of labels (species, part numbers,
bins, modes, etc.). For $t\in T$, write $P_t(O):=\mathbf{1}_{\{q(O)=t\}}$.
These notions will be used later to define count observables.
\end{definition}

\begin{remark}[Classical vs.\ quantum instantiations]
In a classical instantiation, $\rho_O$ may be a point or distribution on
phase space and $\oplus$ corresponds to Cartesian product with possible
interaction forces; in a quantum instantiation, $\rho_O$ is a density operator
on $\mathcal{H}_O$ and $\oplus$ corresponds to the tensor product
$\mathcal{H}_A\otimes\mathcal{H}_B$ together with an interaction Hamiltonian.
The proofs below do not depend on the specific choice, only on the fact that
constituents remain identifiable prior to coarse–graining.
\end{remark}

\noindent
With these definitions in place, we can now state the basic physical
assumptions (axioms) that underwrite the Non–Identity Theorem and distinguish
aggregation of concrete individuals from numerical addition.

\section{Physicist's Proof: The Non-Identity Theorem}

\subsection{Preliminaries}
\label{subsec:phys-prelim}

This section fixes the kinematic and semantic setup for the physicist's proof.
We keep the assumptions minimal and model-agnostic so that both classical and
quantum descriptions can be instantiated without loss of generality.

\begin{definition}[Spacetime and snapshots]
Let $(\mathcal{M},g)$ be a four–dimensional spacetime manifold (with metric $g$
unspecified), and let $\Sigma_t\subset\mathcal{M}$ denote a spacelike hypersurface
(``time slice'') indexed by a parameter $t$ in some interval $I\subset\mathbb{R}$.
\end{definition}

\begin{definition}[Physical object (worldtube with properties)]
A \emph{physical object} $O$ is a tuple
\[
O=\big(W_O,\ \rho_O,\ \{p_i(O)\}_{i\in \mathcal{I}}\big),
\]
where:
\begin{itemize}
    \item $W_O\subset\mathcal{M}$ is a \emph{worldtube}: a closed subset with
    nonempty intersection $W_O\cap\Sigma_t$ for $t$ in some interval
    $I_O\subseteq I$. Its \emph{snapshot} at $t$ is
    $R_O(t):=W_O\cap\Sigma_t$ (possibly empty if $t\notin I_O$).
    \item $\rho_O$ encodes the object's dynamical state (e.g.\ a classical
    phase–space distribution or a quantum density operator on a Hilbert space
    $\mathcal{H}_O$). We make no measurement assumptions here.
    \item $\{p_i\}_{i\in\mathcal{I}}$ is a (possibly countable) family of
    admissible physical observables or attributes with well–defined values
    $p_i(O)$ (mass, charge, internal parameters, preparation data, etc.).
\end{itemize}
We write $\mathfrak{U}$ for the class of all such objects.
\end{definition}

\begin{definition}[Physical identity relation]
For $A,B\in\mathfrak{U}$, we say $A$ and $B$ are \emph{physically identical},
written $A\equiv B$, if and only if the following hold simultaneously:
\begin{enumerate}
    \item \textbf{Worldtube equality:} $W_A=W_B$ as subsets of $\mathcal{M}$;
    equivalently, $R_A(t)=R_B(t)$ for all $t\in I_A\cup I_B$.
    \item \textbf{State equality:} $\rho_A=\rho_B$ (in the relevant
    dynamical/state space).
    \item \textbf{Observable equality:} $p_i(A)=p_i(B)$ for all $i\in\mathcal{I}$.
\end{enumerate}
This is a physical analogue of Leibniz’s criterion: equality across all
admissible kinematic, dynamical, and attribute data.
\end{definition}

\begin{remark}
The relation $\equiv$ refines mere set–theoretic equality by binding together
geometric support ($W_O$), dynamical description ($\rho_O$), and attribute
valuations ($p_i$). It is an equivalence relation on $\mathfrak{U}$.
\end{remark}

\begin{definition}[Composite system / physical combination]
Given $A,B\in\mathfrak{U}$ with (possibly overlapping) lifetimes, their
\emph{composite} is the pair
\[
A\oplus B\ :=\ \big(W_A\cup W_B,\ \rho_{A\cup B},\ \{p_i(A\cup B)\}_{i\in\mathcal{I}_{A\cup B}}\big),
\]
where $\rho_{A\cup B}$ is the joint state obtained by the appropriate
composition rule (direct product in classical mechanics, tensor product in
quantum mechanics), possibly evolved by an interaction term $H_{\mathrm{int}}$,
and where the observable family is enlarged to include interaction observables.
The operation $\oplus$ is \emph{not} numerical addition; it constructs a
composite physical system while retaining $A$ and $B$ as identifiable
constituents (unless a subsequent coarse–graining is applied).
\end{definition}

\begin{definition}[Finite aggregates of objects]
A \emph{finite aggregate} is a finite multiset
$\mathcal{A}=\{A_1,\dots,A_n\}\subset \mathfrak{U}$ of pairwise not
necessarily identical objects. Its instantaneous spatial support at time $t$
is $R_{\mathcal{A}}(t):=\bigcup_{k=1}^n R_{A_k}(t)$.
\end{definition}

\begin{definition}[Type predicates and classifications]
A \emph{type predicate} is a map $P:\mathfrak{U}\to\{0,1\}$. A
\emph{classification} (or \emph{type map}) is a function
$q:\mathfrak{U}\to T$ into a discrete set $T$ of labels (species, part numbers,
bins, modes, etc.). For $t\in T$, write $P_t(O):=\mathbf{1}_{\{q(O)=t\}}$.
These notions will be used later to define count observables.
\end{definition}

\begin{remark}[Classical vs.\ quantum instantiations]
In a classical instantiation, $\rho_O$ may be a point or distribution on
phase space and $\oplus$ corresponds to Cartesian product with possible
interaction forces; in a quantum instantiation, $\rho_O$ is a density operator
on $\mathcal{H}_O$ and $\oplus$ corresponds to the tensor product
$\mathcal{H}_A\otimes\mathcal{H}_B$ together with an interaction Hamiltonian.
The proofs below do not depend on the specific choice, only on the fact that
constituents remain identifiable prior to coarse–graining.
\end{remark}

\noindent
With these definitions in place, we can now state the basic physical
assumptions (axioms) that underwrite the Non–Identity Theorem and distinguish
aggregation of concrete individuals from numerical addition.

\subsection{Fundamental Axioms from Physics}
\label{subsec:phys-axioms}

We now list the minimal structural axioms used in the physicist's proof.
They are deliberately model–agnostic (compatible with both classical and
quantum descriptions) and formalise standard practice: composites retain
identifiable constituents; distinct macroscopic objects differ in at least one
admissible way; and typed counts are additive after a declared classification.

\begin{axiom}[Constituent identifiability]\label{ax:constituent-id}
For any two objects $A,B\in\mathfrak{U}$, the composite $A\oplus B$ admits a
well–defined finite multiset of constituents, written $\mathrm{Cons}(A\oplus B)
=\{A,B\}$ (counting multiplicities). If two composites are physically identical,
$A\oplus B \equiv A'\oplus B'$, then their constituent multisets agree up to
permutation:
\[
\mathrm{Cons}(A\oplus B)=\mathrm{Cons}(A'\oplus B').
\]
\end{axiom}

\begin{remark}
Axiom~\ref{ax:constituent-id} encodes that the factorisation into labelled
subsystems is part of the composite's physical description (prior to any
coarse–graining). It rules out ``constituent collapse'' under equality of
composites.
\end{remark}

\begin{axiom}[Spatiotemporal non–coincidence (macroscopic impenetrability)]
\label{ax:noncoinc}
Let $A,B\in\mathfrak{U}$ be distinct macroscopic objects with nonzero spatial
interiors at a common time $t$. Then the interiors of their snapshots are
disjoint:
\[
\mathrm{int}\,R_A(t)\ \cap\ \mathrm{int}\,R_B(t)\;=\;\varnothing,
\]
although boundary contact is allowed. Equivalently, their worldtubes differ on
a time set of nonzero measure: $W_A\neq W_B$.
\end{axiom}

\begin{remark}
For quantum or microscopic systems, Axiom~\ref{ax:noncoinc} is replaced in
practice by statistical/kinematic constraints (e.g.\ anti–symmetry for
fermions, mode labels for fields). The proof below needs only that distinct
systems fail to have \emph{all} kinematic data identical over an interval.
\end{remark}

\begin{axiom}[Observable richness / discernibility]\label{ax:discern}
There exists a separating family of admissible observables
$\{p_i\}_{i\in\mathcal{I}}$ such that for any $A\neq B$ in $\mathfrak{U}$
there is some index $i$ with $p_i(A)\neq p_i(B)$. (Equivalently: physical
objects are not topologically or measure–theoretically indistinguishable across
all observables.)
\end{axiom}

\begin{axiom}[State composition and marginals]\label{ax:marginals}
For any $A,B\in\mathfrak{U}$, the composite state $\rho_{A\cup B}$ carries
well–defined \emph{marginals} corresponding to the constituents:
\[
\Pi_A(\rho_{A\cup B})=\rho_A, \qquad \Pi_B(\rho_{A\cup B})=\rho_B,
\]
where $\Pi_A,\Pi_B$ denote the appropriate reduction maps (partial trace in
quantum mechanics; marginalisation/projection in classical mechanics). If
$A\oplus B \equiv A'\oplus B'$, then the multisets of marginals agree:
$\{\rho_A,\rho_B\}=\{\rho_{A'},\rho_{B'}\}$.
\end{axiom}

\begin{axiom}[Typed count observable and additivity]\label{ax:typed-count}
Fix a classification (type map) $q:\mathfrak{U}\to T$ into a discrete label set
$T$. For each $t\in T$, define the \emph{typed count observable} on finite
aggregates $\mathcal{A}$ by
\[
N_t(\mathcal{A})\ :=\ \sum_{O\in\mathcal{A}} \mathbf{1}_{\{q(O)=t\}} \ \in\ \mathbb{N}.
\]
Then $N_t$ is invariant under permutation of constituents and \emph{additive}
under disjoint union of aggregates:
\[
N_t(\mathcal{A}\uplus \mathcal{B})\ =\ N_t(\mathcal{A})+N_t(\mathcal{B}).
\]
\end{axiom}

\begin{axiom}[Coarse–graining can be many–to–one]\label{ax:manytoone}
Nontrivial classifications are generally many–to–one: unless $q$ is injective,
there exist $A\neq B$ with $q(A)=q(B)$. (In applications, $q$ encodes the
intentional identification of objects by ``kind'', e.g.\ species, part number,
bin, or mode.)
\end{axiom}

\begin{remark}[Information monotonicity under classification]
Axiom~\ref{ax:manytoone} expresses the structural source of information loss:
a many–to–one $q$ discards which individual contributed a given unit to the
count. If an information functional $I$ is chosen (e.g.\ Shannon information
on a preparation ensemble), then typically $I(q(\mathcal{A}))\le I(\mathcal{A})$,
with strict inequality for aggregates that $q$ identifies.
\end{remark}

\medskip

These axioms are sufficient to prove the Non–Identity Theorem (Section~\ref{subsec:phys-nonid})
and to connect physical aggregation with typed counting observables without any
assumption that distinct objects are ``the same'' in the Leibnizian sense.

\subsection{Supporting Lemmas}
\label{subsec:phys-lemmas}

We develop lemmas that isolate (i) where physical non–identity necessarily
appears and (ii) how typed counts can agree while composites remain physically
different. Each lemma is proved directly from the axioms of
Section~\ref{subsec:phys-axioms}.

\begin{lemma}[Spatiotemporal discernibility]\label{lem:spacetime-discern}
Let $A,B\in\mathfrak{U}$ be distinct macroscopic objects with nonzero spatial
interiors at some common time $t\in I_A\cap I_B$. Then there exists an admissible
observable $p$ with $p(A)\neq p(B)$.
\end{lemma}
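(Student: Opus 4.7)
The plan is to combine the spatiotemporal non-coincidence axiom (Axiom~\ref{ax:noncoinc}) with the observable richness axiom (Axiom~\ref{ax:discern}) by constructing an explicit distinguishing observable from the geometric data of the snapshots at the common time $t$. In essence, once the interiors of $R_A(t)$ and $R_B(t)$ are known to be disjoint, any reasonable ``spatial occupancy'' observable will separate $A$ from $B$; the only care is to ensure that such an observable lies in the admissible family $\{p_i\}_{i\in\mathcal{I}}$.

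First I would fix the common time $t\in I_A\cap I_B$ at which both $\mathrm{int}\,R_A(t)$ and $\mathrm{int}\,R_B(t)$ are nonempty. Since $A\neq B$ in $\mathfrak{U}$, Axiom~\ref{ax:noncoinc} applies and yields $\mathrm{int}\,R_A(t)\cap\mathrm{int}\,R_B(t)=\varnothing$. Choose any point $x_0\in\mathrm{int}\,R_A(t)$; disjointness forces $x_0\notin\mathrm{int}\,R_B(t)$. Next I would define the spatial indicator observable
\[
p_{x_0}(O)\ :=\ \mathbf{1}_{\{x_0\in R_O(t)\}},
\]
which takes the value $1$ on $A$ and $0$ on $B$ (using that $x_0$ lies in the interior, hence in the snapshot, of $A$, and strictly outside the interior of $B$; with a small perturbation of $x_0$ within $\mathrm{int}\,R_A(t)$ one can also ensure $x_0\notin R_B(t)$, not merely $x_0\notin\mathrm{int}\,R_B(t)$, by exploiting the openness of the complement of the closed snapshot $R_B(t)$ near $\mathrm{int}\,R_A(t)$).

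Finally I would argue that $p_{x_0}$ is admissible. The cleanest route is to observe that spatial localisation at a spacetime event is a canonical kinematic attribute included among the $\{p_i\}$ in any physical instantiation (classical or quantum, where it corresponds to a projection onto a spatial region); alternatively, since we have already produced a well-defined functional on $\mathfrak{U}$ that distinguishes $A$ from $B$, the separating property of Axiom~\ref{ax:discern} guarantees the existence of \emph{some} admissible $p_i$ with $p_i(A)\neq p_i(B)$, and we may take $p:=p_i$.

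The main obstacle is the admissibility step, not the construction: the axiom of observable richness is stated abstractly as a separating family, so one must either (i) enlarge the notion of ``admissible observable'' to include indicator functionals of spatial regions in $\Sigma_t$ as a matter of definition, or (ii) invoke Axiom~\ref{ax:discern} directly and accept that the construction of $p_{x_0}$ only serves as existence scaffolding. I would adopt route (ii) for brevity, since the lemma only claims existence of a distinguishing observable, and route (i) would require a small digression into what counts as an admissible kinematic attribute in the chosen instantiation.
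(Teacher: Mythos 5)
Your proposal is correct and takes essentially the same route as the paper, whose proof likewise invokes Axiom~\ref{ax:noncoinc} for the disjoint snapshot interiors, defines a spatial observable (``centroid position or an indicator of membership in a chosen open set contained in one snapshot interior''), and defers admissibility to Axiom~\ref{ax:discern}. If anything you are more careful than the paper: your boundary-contact perturbation is sound (any nonempty open subset of the closed set $R_B(t)$ would lie in $\mathrm{int}\,R_B(t)$, so $\mathrm{int}\,R_A(t)\setminus R_B(t)\neq\varnothing$), and you state explicitly what the paper only implies---that since admissibility of the constructed indicator is not axiomatized, the lemma ultimately rests on Axiom~\ref{ax:discern}, with the geometric construction serving as scaffolding.
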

\begin{proof}
By Axiom~\ref{ax:noncoinc}, $\mathrm{int}\,R_A(t)$ and $\mathrm{int}\,R_B(t)$
are disjoint. Define an admissible observable $p$ (e.g.\ centroid position or an
indicator of membership in a chosen open set contained in one snapshot interior).
Then $p(A)\neq p(B)$. This is consistent with Axiom~\ref{ax:discern}, which asserts
the existence of a separating family of observables.
\end{proof}

\begin{lemma}[Worldtube inequality]\label{lem:worldtube-ineq}
If $A,B\in\mathfrak{U}$ are distinct macroscopic objects, then $W_A\neq W_B$.
Equivalently, there exists a time set of nonzero measure on which $R_A(t)\neq R_B(t)$.
\end{lemma}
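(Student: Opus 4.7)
The plan is to derive Lemma~\ref{lem:worldtube-ineq} almost directly from Axiom~\ref{ax:noncoinc}, with a small amount of case analysis to handle the hypothesis of that axiom (which presupposes a common time with nonzero spatial interiors). The two clauses of the lemma---worldtube inequality $W_A\neq W_B$ and the existence of a positive-measure time set on which snapshots differ---will then be shown to coincide using the snapshot decomposition $W_O=\bigcup_{t\in I_O}\bigl(R_O(t)\times\{t\}\bigr)$ of a worldtube into time slices.

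First, I would split into two cases based on whether the lifetimes $I_A$ and $I_B$ overlap on a set of positive measure. In the overlap case, pick any common time $t_0\in I_A\cap I_B$ at which both objects have nonzero spatial interior (macroscopic objects by assumption have such a time; otherwise the ``macroscopic'' qualifier would be vacuous). Axiom~\ref{ax:noncoinc} then gives $\mathrm{int}\,R_A(t_0)\cap \mathrm{int}\,R_B(t_0)=\varnothing$, whence each interior is a nonempty open set not contained in the other, so $R_A(t_0)\neq R_B(t_0)$. Continuity of the worldtubes in $t$ (implicit in the closedness of $W_O$ and the standing regularity of physical objects) lets this inequality persist in an open time neighborhood of $t_0$, which automatically has positive Lebesgue measure. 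In the non-overlap case, $I_A\triangle I_B$ itself has positive measure, and for $t$ in this symmetric difference one of the snapshots is empty while the other is not, giving $R_A(t)\neq R_B(t)$ on that whole set.

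Second, I would establish the ``equivalently'' clause. One direction is immediate: if $R_A(t)\neq R_B(t)$ on a set of positive measure, then the slice decompositions of $W_A$ and $W_B$ differ, so $W_A\neq W_B$. For the converse, I would note that if $W_A=W_B$ as subsets of $\mathcal{M}$ then slicing by $\Sigma_t$ gives $R_A(t)=R_B(t)$ for every $t$, so no positive-measure discrepancy can exist---contrapositively, $W_A\neq W_B$ forces discrepancy, and regularity promotes a single-time discrepancy to a positive-measure one as in the first case.

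The main obstacle is the promotion from ``differ at one instant'' to ``differ on a set of nonzero measure.'' Axiom~\ref{ax:noncoinc} already asserts this in its second sentence, so strictly speaking the lemma is a direct corollary; the delicate point is that the phrase ``nonzero measure'' is not a purely combinatorial consequence of non-coincidence at a single $t_0$ but uses a tacit continuity/regularity assumption on worldtubes. I would flag this by either (i) invoking the closed-worldtube regularity of Definition~4.2 to conclude that disjointness of open interiors is an open condition in $t$, or (ii) simply citing Axiom~\ref{ax:noncoinc}'s second sentence as the operative input. Beyond this, the proof is essentially an unpacking of definitions.
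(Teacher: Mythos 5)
Your proposal is correct and takes essentially the same route as the paper, whose own proof is a three-line contrapositive: if $W_A=W_B$ then $R_A(t)=R_B(t)$ for all $t\in I_A\cap I_B$, contradicting the disjoint-interiors clause of Axiom~\ref{ax:noncoinc}, hence $W_A\neq W_B$. You are in fact more careful than the paper, which silently assumes overlapping lifetimes with nonzero interiors at a common time and never addresses the promotion from a single-time discrepancy to a positive-measure one; your observations that the non-overlap case needs the symmetric difference $I_A\triangle I_B$, and that the measure claim is really supplied either by the axiom's second sentence or by a tacit continuity/regularity assumption on worldtubes, identify exactly the gaps the paper's terse argument leaves implicit.
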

\begin{proof}
If $W_A=W_B$, then $R_A(t)=R_B(t)$ for all $t\in I_A\cap I_B$. This contradicts
Axiom~\ref{ax:noncoinc} for macroscopic bodies with nonzero spatial interiors.
Hence $W_A\neq W_B$.
\end{proof}

\begin{lemma}[State/observable discernibility]\label{lem:state-discern}
Let $A,B\in\mathfrak{U}$ with $A\neq B$. Then at least one of the following holds:
\begin{enumerate}
    \item $W_A\neq W_B$ (Lemma~\ref{lem:worldtube-ineq});
    \item $\rho_A\neq \rho_B$;
    \item There exists $i\in\mathcal{I}$ with $p_i(A)\neq p_i(B)$.
\end{enumerate}
In particular, $A\not\equiv B$ unless all three equalities hold simultaneously.
\end{lemma}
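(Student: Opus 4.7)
The plan is to treat this lemma as essentially the contrapositive of the three-clause definition of $\equiv$ given in Section~\ref{subsec:phys-prelim}, supplemented by the structural axioms that guarantee physical distinctness actually manifests in at least one of the listed channels. Since each $O\in\mathfrak{U}$ is by definition a tuple $(W_O,\rho_O,\{p_i(O)\}_{i\in\mathcal{I}})$, the relation $\equiv$ coincides with tuple-level equality, and the conclusion amounts to a pigeonhole over the three components of such a tuple.

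First I would proceed by contradiction: assume none of the alternatives (1), (2), (3) holds, so that $W_A=W_B$, $\rho_A=\rho_B$, and $p_i(A)=p_i(B)$ for every $i\in\mathcal{I}$. Each equality is exactly one of the conjuncts in the definition of the physical identity relation, so their simultaneous truth yields $A\equiv B$. Because $\equiv$ is equivalent to tuple equality on $\mathfrak{U}$, this forces $A=B$, contradicting the hypothesis $A\neq B$. Hence at least one of (1)--(3) must hold, which is the disjunction claimed. The final sentence (``$A\not\equiv B$ unless all three equalities hold simultaneously'') is then just the contrapositive restatement of the conjunctive definition of $\equiv$ and needs no separate argument.

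I would close with a short observation that the disjunction is non-vacuous in natural subcases, so that the lemma is not empty content: Axiom~\ref{ax:discern} furnishes a separating family of observables that always witnesses clause (3), while Lemma~\ref{lem:worldtube-ineq} already delivers clause (1) for any two distinct macroscopic bodies via Axiom~\ref{ax:noncoinc}. The only potential subtlety, and the single place where care is needed, is to be explicit that ``$A\neq B$'' is read as tuple inequality on $\mathfrak{U}$; with that reading fixed there is no real obstacle, since the entire argument reduces to a logical unpacking of the definition of $\equiv$.
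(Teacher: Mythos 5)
Your proof is correct, but it follows a genuinely different route from the paper's. The paper's own proof is a one-step appeal to Axiom~\ref{ax:discern}: whenever $A\neq B$, the separating family of observables furnishes an index $i$ with $p_i(A)\neq p_i(B)$, so clause~(3) --- and hence the whole disjunction --- holds outright, and the final sentence is read off from the definition of $\equiv$. You instead argue by contradiction from the tuple structure of $\mathfrak{U}$: if $W_A=W_B$, $\rho_A=\rho_B$, and $p_i(A)=p_i(B)$ for all $i$, then the tuples $\big(W_O,\rho_O,\{p_i(O)\}\big)$ coincide, forcing $A=B$ against the hypothesis. That is a purely logical pigeonhole requiring no axiom, and you correctly flag its one load-bearing assumption --- that ``$A\neq B$'' is read as tuple inequality. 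The trade-off deserves emphasis: under your reading the lemma becomes a tautology about tuples and Axiom~\ref{ax:discern} is revealed as a strictly stronger postulate (it asserts that an \emph{observable} separates distinct objects even when only the worldtube or state differs), whereas the paper deliberately does not identify distinctness of physical individuals with distinctness of their mathematical data --- it treats $A\neq B$ as primitive and imports discernibility as a physical axiom, which is precisely the conceptual point the surrounding section is making. Note also that the paper's route proves something stronger than the stated disjunction (clause~(3) alone always holds when $A\neq B$), while yours proves exactly the disjunction from weaker resources; both are valid proofs of the lemma as stated, and your closing non-vacuity remark correctly reconnects your argument to Axiom~\ref{ax:noncoinc} and Lemma~\ref{lem:worldtube-ineq}.
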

\begin{proof}
By Axiom~\ref{ax:discern} there exists an observable separating $A$ and $B$
unless worldtubes and states coincide and all admissible observables agree.
Thus at least one listed inequality must hold. The final statement is just the
definition of $\equiv$.
\end{proof}

\begin{lemma}[Constituent preservation in composites]\label{lem:cons-pres}
If $A\oplus B \equiv A'\oplus B'$, then
$\mathrm{Cons}(A\oplus B)=\mathrm{Cons}(A'\oplus B')$ as multisets, and
$\{\rho_A,\rho_B\}=\{\rho_{A'},\rho_{B'}\}$ as multisets of marginals.
\end{lemma}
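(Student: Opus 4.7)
The plan is to reduce both assertions directly to the axioms of Section~\ref{subsec:phys-axioms}, treating this lemma as a consolidation step that packages Axioms~\ref{ax:constituent-id} and \ref{ax:marginals} in the form needed by the forthcoming Non--Identity Theorem. Since neither claim requires new constructions, the proof is essentially an unpacking of the definition of $\equiv$ and an invocation of the relevant clauses of the two axioms.

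First I would handle the constituent assertion. By hypothesis $A\oplus B\equiv A'\oplus B'$, so in particular the composites agree as physical objects in the sense of the definition of $\equiv$ (coincident worldtubes, equal states, and agreeing admissible observables). The second clause of Axiom~\ref{ax:constituent-id} states exactly that physical identity of composites forces multiset equality of their constituent families, giving $\mathrm{Cons}(A\oplus B)=\mathrm{Cons}(A'\oplus B')$ with no further computation.

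Next I would address the marginals. From $A\oplus B\equiv A'\oplus B'$, the state-equality clause of $\equiv$ gives $\rho_{A\cup B}=\rho_{A'\cup B'}$. Applying the reduction maps $\Pi_A,\Pi_B$ on one side and $\Pi_{A'},\Pi_{B'}$ on the other (partial trace in the quantum setting, marginalisation in the classical setting, as recorded in Axiom~\ref{ax:marginals}) produces a well-defined pair of marginals on each side. The final clause of Axiom~\ref{ax:marginals} asserts precisely that these two pairs agree as multisets, yielding $\{\rho_A,\rho_B\}=\{\rho_{A'},\rho_{B'}\}$.

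The only nontrivial point, and what I would flag as the main conceptual obstacle, is definitional rather than computational: one must respect that the labelling of constituents inside a composite is fixed only up to permutation. This is why both conclusions are phrased as multiset equalities and not as ordered identifications $A=A'$, $B=B'$; a stronger ordered conclusion would need extra structure (a canonical ordering on $\mathfrak{U}$ or a distinguishing observable separating $A$ from $B$ via Lemma~\ref{lem:state-discern}) that the axioms deliberately do not supply. Once this is made explicit, both parts of the lemma follow immediately from the axioms with no hidden case analysis.
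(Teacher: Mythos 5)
Your proposal is correct and follows exactly the paper's route: the paper also proves this lemma by direct appeal to the second clause of Axiom~\ref{ax:constituent-id} for the constituent multisets and the final clause of Axiom~\ref{ax:marginals} for the marginals. Your extra remarks on why the conclusions are multiset rather than ordered equalities are a sound elaboration but not a different argument.
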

\begin{proof}
Immediate from Axiom~\ref{ax:constituent-id} (constituent identifiability) and
Axiom~\ref{ax:marginals} (marginals agree under physical identity of composites).
\end{proof}

\begin{lemma}[Typed counts ignore individuality]\label{lem:typed-ignore}
Fix a classification $q:\mathfrak{U}\to T$ and $t_0\in T$. If $A,B\in\mathfrak{U}$
satisfy $q(A)=q(B)=t_0$ with $A\not\equiv B$, then
\[
N_{t_0}(\{A,B\})=2,
\]
but the composite $A\oplus B$ is not physically identical to $X\oplus X$ for any
single object $X$.
\end{lemma}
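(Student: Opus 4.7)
The lemma splits cleanly into two assertions, and I would address them in turn, since the first is essentially bookkeeping from Axiom~\ref{ax:typed-count} while the second requires leveraging the rigidity built into Axioms~\ref{ax:constituent-id} and~\ref{ax:marginals}.

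The count claim $N_{t_0}(\{A,B\}) = 2$ follows directly from the definition of the typed count observable in Axiom~\ref{ax:typed-count}: unfolding
\[
N_{t_0}(\{A,B\}) \;=\; \sum_{O\in\{A,B\}}\mathbf{1}_{\{q(O)=t_0\}},
\]
the hypothesis $q(A)=q(B)=t_0$ makes both indicators evaluate to $1$, giving the total $2$. No use of $A\not\equiv B$ is needed here; indeed, this is precisely the axiom's expression of the fact that counts ignore individuality.

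The non-identity claim I would prove by contradiction. Suppose some single object $X\in\mathfrak{U}$ satisfied $A\oplus B\equiv X\oplus X$. Then Lemma~\ref{lem:cons-pres}, which packages Axiom~\ref{ax:constituent-id}, forces equality of the constituent multisets:
\[
\{A,B\}\;=\;\mathrm{Cons}(A\oplus B)\;=\;\mathrm{Cons}(X\oplus X)\;=\;\{X,X\}.
\]
The right-hand multiset contains $X$ with multiplicity $2$, so the only available multiset matching assigns both $A$ and $B$ the value $X$; hence $A\equiv X$ and $B\equiv X$. Because $\equiv$ is an equivalence relation (noted after the definition of physical identity), transitivity yields $A\equiv B$, contradicting the hypothesis. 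Therefore no such $X$ exists. As a robustness check, I would note that the marginal clause of Axiom~\ref{ax:marginals} supplies an independent witness: it would force $\{\rho_A,\rho_B\}=\{\rho_X,\rho_X\}$, hence $\rho_A=\rho_B=\rho_X$, which is already at odds with the generic interpretation of $A\not\equiv B$, though this alone does not suffice since state equality is weaker than $\equiv$.

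\textbf{Expected obstacle.} The only delicate point is interpretive rather than technical: one must fix whether equality of constituents inside the multiset $\mathrm{Cons}(\cdot)$ is tested by strict tuple identity of the triples $(W_O,\rho_O,\{p_i(O)\})$ or by the physical identity relation $\equiv$. The proof sketched above uses the latter, which is the natural reading consistent with Axioms~\ref{ax:constituent-id}--\ref{ax:marginals}; if a stricter reading were intended, the contradiction is only sharper, since strict equality implies $\equiv$. I would make this convention explicit in one sentence at the start of the proof to preempt the ambiguity.
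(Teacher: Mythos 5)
Your proposal is correct and follows essentially the same route as the paper's proof: the count $N_{t_0}(\{A,B\})=2$ by unfolding Axiom~\ref{ax:typed-count}, and the non-identity claim by contradiction via Lemma~\ref{lem:cons-pres}, with $\{A,B\}=\{X,X\}$ forcing $A\equiv B\equiv X$. Your closing remark on whether multiset equality in $\mathrm{Cons}(\cdot)$ is read via strict tuple equality or via $\equiv$ flags a genuine ambiguity the paper glosses over, and your resolution (either reading sustains the contradiction) is sound.
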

\begin{proof}
By Axiom~\ref{ax:typed-count}, $N_{t_0}$ counts instances of $t_0$ additively,
hence $N_{t_0}(\{A,B\})=1+1=2$. Suppose for contradiction there exists $X$ with
$A\oplus B \equiv X\oplus X$. By Lemma~\ref{lem:cons-pres},
$\mathrm{Cons}(A\oplus B)=\{A,B\}$ must equal $\mathrm{Cons}(X\oplus X)=\{X,X\}$
as multisets, which forces $A\equiv B\equiv X$, contradicting $A\not\equiv B$.
\end{proof}

\begin{lemma}[Additivity of typed counts on aggregates]\label{lem:add-typed}
For any disjoint finite aggregates $\mathcal{A},\mathcal{B}$ and any $t\in T$,
\[
N_t(\mathcal{A}\uplus\mathcal{B})=N_t(\mathcal{A})+N_t(\mathcal{B}).
\]
\end{lemma}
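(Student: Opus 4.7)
The plan is to derive the equality directly from the definition of $N_t$ given in Axiom~\ref{ax:typed-count}, by splitting the indicator sum across the disjoint multiset union. In fact, the claim is essentially a restatement of the additivity clause of Axiom~\ref{ax:typed-count}; the lemma's role is to package that clause in the form most useful for later use (e.g.\ in the proof of the Non--Identity Theorem), so the proof should be short and mechanical.

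First, I would unfold both sides. Applying the definition
\[
N_t(\mathcal{C})\ =\ \sum_{O\in\mathcal{C}}\mathbf{1}_{\{q(O)=t\}}
\]
with $\mathcal{C}=\mathcal{A}\uplus\mathcal{B}$, I obtain a single sum indexed over the disjoint union of multisets. Since $\uplus$ denotes multiset disjoint union, every element of $\mathcal{A}\uplus\mathcal{B}$ is counted (with its full multiplicity) either as an element of $\mathcal{A}$ or as an element of $\mathcal{B}$, with no overlap and no omission. This is the only structural fact needed.

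Next I would split the sum along this partition:
\[
\sum_{O\in\mathcal{A}\uplus\mathcal{B}}\mathbf{1}_{\{q(O)=t\}}
\ =\ \sum_{O\in\mathcal{A}}\mathbf{1}_{\{q(O)=t\}}\ +\ \sum_{O\in\mathcal{B}}\mathbf{1}_{\{q(O)=t\}},
\]
which is a valid rearrangement for finite sums of non--negative integers. Recognising each summand on the right as $N_t(\mathcal{A})$ and $N_t(\mathcal{B})$ respectively yields the claim. Permutation invariance, already guaranteed by Axiom~\ref{ax:typed-count}, ensures that the splitting does not depend on any ordering chosen for enumeration.

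There is no real obstacle here: the lemma is an immediate corollary of Axiom~\ref{ax:typed-count}, which already asserts additivity under disjoint union. The only point requiring brief care is the distinction between multiset disjoint union $\uplus$ (which preserves multiplicities) and set union (which does not); once this is observed, the splitting of the indicator sum is purely bookkeeping.
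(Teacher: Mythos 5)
Your proof is correct and matches the paper's: the paper likewise observes that the lemma is Axiom~\ref{ax:typed-count} restated for aggregates, proved ``by direct summation over constituents,'' which is precisely your splitting of the indicator sum across the disjoint multiset union. Your extra care about $\uplus$ preserving multiplicities is a harmless elaboration of the same one-line argument.
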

\begin{proof}
This is Axiom~\ref{ax:typed-count} stated for aggregates; the proof is by direct
summation over constituents.
\end{proof}

\begin{lemma}[Coarse–graining is information–decreasing]\label{lem:info-loss-phys}
Assume $q$ is not injective (Axiom~\ref{ax:manytoone}). Then there exist distinct
aggregates $\mathcal{A}\neq\mathcal{B}$ with
\[
\big(N_t(\mathcal{A})\big)_{t\in T}\;=\;\big(N_t(\mathcal{B})\big)_{t\in T}.
\]
\end{lemma}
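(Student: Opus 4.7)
}
The plan is to exhibit two concrete aggregates that are distinct as multisets of individuals yet identical when read through the family of typed count observables. The non--injectivity hypothesis (Axiom~\ref{ax:manytoone}) gives exactly the raw material we need: a pair of distinct individuals lying in the same fibre of $q$. The rest is direct verification from Axiom~\ref{ax:typed-count}. This lemma is the physical analogue of Proposition~\ref{prop:info-loss} on the mathematician's side, and I would deliberately mirror its structure.

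First, I would unpack the hypothesis. Since $q$ is not injective, Axiom~\ref{ax:manytoone} yields two objects $A,B\in\mathfrak{U}$ with $A\neq B$ but $q(A)=q(B)=:t_{0}$. By Axiom~\ref{ax:constituent-id}, multisets of constituents are well--defined, so $\{A\}$ and $\{B\}$ are legitimate finite aggregates and are distinct as multisets (the constituent labels differ). The simplest witness pair is therefore
\[
\mathcal{A}:=\{A\},\qquad \mathcal{B}:=\{B\}.
\]
If a more substantive illustration is desired (to bring out ``genuine aggregation''), I would instead take any third object $C$ and set $\mathcal{A}:=\{A,C\}$, $\mathcal{B}:=\{B,C\}$; the argument below is unchanged.

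Second, I would compute the typed count profile. By Axiom~\ref{ax:typed-count}, for each $t\in T$,
\[
N_t(\mathcal{A})=\mathbf{1}_{\{q(A)=t\}},\qquad N_t(\mathcal{B})=\mathbf{1}_{\{q(B)=t\}}.
\]
Since $q(A)=q(B)=t_{0}$, both indicators agree for every $t$: they equal $1$ when $t=t_{0}$ and $0$ otherwise. Hence $(N_t(\mathcal{A}))_{t\in T}=(N_t(\mathcal{B}))_{t\in T}$, while $\mathcal{A}\neq\mathcal{B}$ as multisets of individuals. (For the enriched choice $\{A,C\}$ vs $\{B,C\}$, Lemma~\ref{lem:add-typed} reduces the count profile to the singleton case just handled.)

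The only conceptual pitfall---and in that sense the ``main obstacle''---is interpretational, not technical: one must read ``$\mathcal{A}\neq\mathcal{B}$'' at the level of labelled constituents in $\mathfrak{U}$ (so that distinctness is sensitive to the individuality guaranteed by Lemmas~\ref{lem:worldtube-ineq}--\ref{lem:state-discern}), for otherwise equality of count profiles would tautologically identify the aggregates and the statement would be vacuous. Once this is made explicit, the lemma is immediate from Axioms~\ref{ax:constituent-id}, \ref{ax:typed-count}, and~\ref{ax:manytoone}, and it locates the information loss precisely at the coarse--graining step $q$, in exact parallel with Proposition~\ref{prop:info-loss}.
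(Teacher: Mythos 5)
Your proof is correct, and it rests on the same mechanism as the paper's: invoke Axiom~\ref{ax:manytoone} to get $A\neq B$ with $q(A)=q(B)=t_0$, then exhibit two aggregates that differ as multisets of constituents but have identical typed count vectors by Axiom~\ref{ax:typed-count}. The difference is the witness. The paper takes $\mathcal{A}=\{A,B\}$ and $\mathcal{B}=\{A,A\}$ --- a pair of distinct objects versus a doubled copy of one --- which is the exact physical analogue of Proposition~\ref{prop:info-loss}'s $\delta_a+\delta_b$ versus $2\,\delta_a$; so your stated intention to ``mirror'' that proposition is not quite realized by your singleton choice $\{A\}$ versus $\{B\}$, which instead mirrors the simpler observation $q_*\delta_a=q_*\delta_b$. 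Your witness is perfectly adequate for the lemma as stated (and arguably more physically natural, since a multiset $\{A,A\}$ containing the same object twice is a formal construct rather than a realizable aggregate), but the paper's pair-versus-double choice buys something downstream: Corollary~\ref{cor:count-vs-phys} explicitly leans on having ``a pair of distinct objects'' against ``a double of a single object in the same type'' to connect the count degeneracy to the Non--Identity Theorem~\ref{thm:nonidentity}, a link your singleton witness cannot supply without reconstructing the paper's example. Your fallback $\{A,C\}$ versus $\{B,C\}$, handled via Lemma~\ref{lem:add-typed}, is likewise correct but has the same limitation. One small note: citing Axiom~\ref{ax:constituent-id} to legitimize singleton aggregates is unnecessary --- the definition of finite aggregate already admits $n=1$ --- though it does no harm.
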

\begin{proof}
Pick $A\neq B$ with $q(A)=q(B)=t_0$. Let $\mathcal{A}=\{A,B\}$ and
$\mathcal{B}=\{A,A\}$. Then $N_{t_0}(\mathcal{A})=2=N_{t_0}(\mathcal{B})$ and
$N_t(\mathcal{A})=N_t(\mathcal{B})=0$ for $t\neq t_0$. Yet
$\mathcal{A}\neq\mathcal{B}$ as aggregates (different constituent multisets),
so the typed count vector agrees while individuality differs.
\end{proof}

\paragraph{Summary.}
Lemmas~\ref{lem:spacetime-discern}--\ref{lem:state-discern} guarantee that
distinct macroscopic objects differ physically (hence cannot be identified).
Lemmas~\ref{lem:typed-ignore} and \ref{lem:info-loss-phys} show that typed
counts conflate distinct composites, isolating the precise sense in which the
numeral ``$2$'' can appear without implying identity of the contributing
individuals. These facts will be used directly in the Non–Identity Theorem
(Section~\ref{subsec:phys-nonid}).

\subsection{Non–Identity Theorem}
\label{subsec:phys-nonid}

We now state and prove the central result for the physicist's perspective:
no composite of two \emph{distinct} physical objects can be physically
identical to a ``double'' of a single object. Equality of such composites
occurs if and only if all constituents are physically identical.

\begin{theorem}[Non–Identity Addition Theorem]\label{thm:nonidentity}
Let $A,B,X\in\mathfrak{U}$. Then
\[
A\oplus B \;\equiv\; X\oplus X
\quad\Longleftrightarrow\quad
A \equiv B \equiv X.
\]
In particular, if $A\not\equiv B$ then there exists no $X$ with
$A\oplus B \equiv X\oplus X$.
\end{theorem}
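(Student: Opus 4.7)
\textbf{Proof plan for Theorem~\ref{thm:nonidentity}.} The plan is to split the biconditional. The $(\Leftarrow)$ direction will be handled by showing that $\equiv$ is a congruence with respect to $\oplus$; the $(\Rightarrow)$ direction will follow almost immediately from the constituent-preservation lemma once the interpretation of multiset equality under $\equiv$ is made precise.

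For the easy direction $(\Leftarrow)$, I assume $A\equiv B\equiv X$ and expand $A\oplus B$ componentwise against $X\oplus X$. The three clauses of $\equiv$ each close under $\oplus$: worldtubes agree because $W_A\cup W_B = W_X\cup W_X = W_X$ whenever $W_A=W_B=W_X$; the composite state $\rho_{A\cup B}$ and $\rho_{X\cup X}$ are produced by the \emph{same} composition rule applied to equal marginals (using Axiom~\ref{ax:marginals} in reverse, together with the fact that the interaction term $H_{\mathrm{int}}$ is determined by the constituents' data); and the enlarged observable families agree pointwise by Axiom~\ref{ax:discern}. Hence $A\oplus B\equiv X\oplus X$.

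For the hard direction $(\Rightarrow)$, I assume $A\oplus B \equiv X\oplus X$ and invoke Lemma~\ref{lem:cons-pres}, which yields the multiset identity $\mathrm{Cons}(A\oplus B)=\mathrm{Cons}(X\oplus X)$, i.e.\ $\{A,B\}=\{X,X\}$ as multisets in $\mathfrak{U}$ taken modulo $\equiv$. The right-hand multiset consists of a single $\equiv$-class with multiplicity two, so any bijective matching of elements forces both $A\equiv X$ and $B\equiv X$; transitivity of $\equiv$ then gives $A\equiv B\equiv X$. As a cross-check, I would note that this is precisely the contrapositive of Lemma~\ref{lem:typed-ignore}: if $A\not\equiv B$ no such $X$ can exist. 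The marginal clause of Lemma~\ref{lem:cons-pres} ($\{\rho_A,\rho_B\}=\{\rho_X,\rho_X\}$) supplies the state-equality component of $\equiv$ independently, and the worldtube/observable clauses follow in the same way, so all three conditions in the definition of $\equiv$ are verified.

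The main obstacle is not computational but conceptual: justifying that multiset equality in $\mathfrak{U}$, delivered by Axiom~\ref{ax:constituent-id} only ``up to permutation,'' really does imply pairwise $\equiv$-identification in the degenerate case $\{X,X\}$. Once this is stated as a brief combinatorial observation about multisets with a repeated element, the rest of the proof is essentially a bookkeeping exercise against the definition of $\equiv$ and the axioms. A secondary subtlety is the tacit assumption that the composition rule producing $\rho_{A\cup B}$ depends only on the $\equiv$-classes of $A$ and $B$ (and on the interaction structure, which is itself constituent-determined); this should be flagged explicitly, since it is what makes $\oplus$ descend to $\mathfrak{U}/{\equiv}$ and thus what underwrites the easy direction.
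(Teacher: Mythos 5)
Your proposal is correct and follows essentially the same route as the paper's proof: the forward direction via Lemma~\ref{lem:cons-pres} and the multiplicity observation that $\{A,B\}=\{X,X\}$ forces $A\equiv X$ and $B\equiv X$, and the converse by componentwise agreement of worldtubes, states, and observables under the composition axioms. Your explicit flagging of the two tacit points---that multiset equality from Axiom~\ref{ax:constituent-id} must be read modulo $\equiv$, and that $\oplus$ must descend to $\equiv$-classes for the easy direction---is a refinement of the paper's argument (which cites Axioms~\ref{ax:constituent-id} and~\ref{ax:marginals} without elaboration), not a different approach.
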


\begin{proof}
$(\Rightarrow)$ Assume $A\oplus B \equiv X\oplus X$.
By Lemma~\ref{lem:cons-pres} (constituent preservation in composites),
the constituent multisets agree:
\[
\mathrm{Cons}(A\oplus B)\;=\;\{A,B\}\;=\;\{X,X\}\;=\;\mathrm{Cons}(X\oplus X).
\]
Equality of multisets forces $A\equiv X$ and $B\equiv X$ (counting multiplicity),
hence $A\equiv B\equiv X$.

\smallskip
$(\Leftarrow)$ Conversely, if $A\equiv B\equiv X$ then by definition of
$\equiv$ (physical identity) each component of the composite description
matches:
worldtubes, states, and observable valuations of $A$ and $B$ coincide with those
of $X$. Forming composites preserves these data (Axioms~\ref{ax:constituent-id}
and~\ref{ax:marginals}), hence $A\oplus B \equiv X\oplus X$.
\end{proof}

\begin{corollary}[Irreducibility of distinct pairs]\label{cor:irreducible-pair}
If $A\not\equiv B$, then $A\oplus B \not\equiv A\oplus A$ and
$A\oplus B \not\equiv B\oplus B$.
\end{corollary}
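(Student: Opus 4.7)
The plan is to read Corollary \ref{cor:irreducible-pair} as a direct contrapositive application of Theorem \ref{thm:nonidentity}. Since the Non-Identity Addition Theorem gives an ``iff'' characterisation of when a composite of two objects is physically identical to a doubled copy, the corollary will follow by instantiating the third object $X$ appropriately.

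First I would treat the two claims separately and symmetrically. For the first assertion $A\oplus B\not\equiv A\oplus A$, I assume for contradiction that $A\oplus B\equiv A\oplus A$. Then I apply the forward direction ($\Rightarrow$) of Theorem \ref{thm:nonidentity} with the specific choice $X:=A$. That direction yields $A\equiv B\equiv A$, and in particular $A\equiv B$, contradicting the hypothesis $A\not\equiv B$. Hence $A\oplus B\not\equiv A\oplus A$. For the second assertion, I would repeat the same argument verbatim, instantiating $X:=B$ instead, to conclude $A\oplus B\not\equiv B\oplus B$.

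As a sanity check, I would note that reflexivity of $\equiv$ (which makes $A\equiv A$ and $B\equiv B$ trivially true) is what allows the Theorem's hypothesis pattern ``$X\equiv X$'' to be matched by setting $X$ equal to either constituent; no additional structural assumption is needed here.

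There is really no hard step. The only point where care is required is to ensure that the constituent-multiset argument underlying Theorem \ref{thm:nonidentity} (invoked via Lemma \ref{lem:cons-pres}) applies to the slightly degenerate composites $A\oplus A$ and $B\oplus B$; but these are covered by Axiom \ref{ax:constituent-id}, which explicitly allows the constituent multiset $\mathrm{Cons}(A\oplus A)=\{A,A\}$ with multiplicity. Thus the corollary is essentially a two-line instantiation of the theorem, and no new machinery is required.
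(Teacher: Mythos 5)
Your proof is correct and is exactly the paper's argument: the paper likewise proves the corollary by applying Theorem~\ref{thm:nonidentity} with $X=A$ (respectively $X=B$), with your contrapositive phrasing being just an unpacking of that one-line instantiation.
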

\begin{proof}
Apply Theorem~\ref{thm:nonidentity} with $X=A$ (or $X=B$).
\end{proof}

\begin{corollary}[Typed counting does not imply physical identity]
\label{cor:typed-not-id}
Fix a classification $q:\mathfrak{U}\to T$ and let $t_0\in T$.
If $A,B\in\mathfrak{U}$ satisfy $q(A)=q(B)=t_0$ but $A\not\equiv B$, then
\[
N_{t_0}(\{A,B\})=2
\quad\text{while}\quad
A\oplus B \not\equiv X\oplus X\ \ \text{for all }X.
\]
\end{corollary}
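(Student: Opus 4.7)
The plan is to split the statement into its two independent claims and discharge each with a single tool already established in the excerpt. The first claim is a direct computation using the additivity axiom for typed counts; the second claim is a contrapositive reading of the Non--Identity Addition Theorem. Because both halves are already essentially ``in the toolbox,'' there is no genuine obstacle—this corollary is stated precisely to package those two facts together for later citation.

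For the first half, I would start from Axiom~\ref{ax:typed-count}. Since the aggregate $\{A,B\}$ has exactly two constituents and both satisfy $q(A)=q(B)=t_0$, the indicator $\mathbf{1}_{\{q(\cdot)=t_0\}}$ evaluates to $1$ on each, so the definition of $N_{t_0}$ gives $N_{t_0}(\{A,B\})=1+1=2$. Equivalently, one may view $\{A,B\}=\{A\}\uplus\{B\}$ and invoke Lemma~\ref{lem:add-typed} (additivity of typed counts on aggregates) together with $N_{t_0}(\{A\})=N_{t_0}(\{B\})=1$.

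For the second half, I would argue by contradiction. Suppose there exists $X\in\mathfrak{U}$ with $A\oplus B\equiv X\oplus X$. Then Theorem~\ref{thm:nonidentity} (Non--Identity Addition Theorem) applies and forces $A\equiv B\equiv X$, which contradicts the hypothesis $A\not\equiv B$. Hence no such $X$ exists, and $A\oplus B\not\equiv X\oplus X$ for all $X\in\mathfrak{U}$. Combining the two halves gives the stated corollary, which cleanly exhibits the conceptual point of the paper: typed counts and physical identity are logically independent—a count of $2$ can coexist with the impossibility of representing the composite as a doubled copy.

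The ``hard part,'' to the extent there is one, is purely expository: making clear that the two displayed assertions live on different rungs of the framework—one on the counting side (a homomorphic readout after classification $q$), the other on the composite side (an identity in $\mathfrak{U}$ under $\equiv$)—so that their simultaneous truth is not a contradiction but exactly the separation thesis the paper has been building toward.
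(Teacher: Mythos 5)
Your proposal is correct and matches the paper's own proof, which likewise derives the count statement directly from Axiom~\ref{ax:typed-count} and the non-identity statement from Theorem~\ref{thm:nonidentity} (whose ``in particular'' clause already contains the contrapositive you spell out). Your expanded details---the indicator computation, the alternative via Lemma~\ref{lem:add-typed}, and the explicit contradiction---are just fuller renderings of the same two-step argument.
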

\begin{proof}
The count statement is Axiom~\ref{ax:typed-count}. The non–identity statement
is Theorem~\ref{thm:nonidentity}.
\end{proof}

\begin{corollary}[Counting–equality vs.\ composite–equality separation]
\label{cor:count-vs-phys}
Assume $q$ is not injective (Axiom~\ref{ax:manytoone}). Then there exist distinct
aggregates $\mathcal{A}\neq\mathcal{B}$ with identical typed–count vectors
$\big(N_t(\mathcal{A})\big)_{t\in T}=\big(N_t(\mathcal{B})\big)_{t\in T}$ while
no physical identity of the corresponding composites is implied.
\end{corollary}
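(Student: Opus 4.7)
The plan is to combine Lemma~\ref{lem:info-loss-phys} (which produces distinct aggregates with coinciding typed--count vectors) with the Non--Identity Addition Theorem (Theorem~\ref{thm:nonidentity}) to witness that the coinciding counts do \emph{not} force physical identity of the associated composites. Concretely, I would invoke Axiom~\ref{ax:manytoone} to pick $A,B \in \mathfrak{U}$ with $q(A)=q(B)=t_0$ and $A \neq B$, and I would take $A,B$ to be macroscopic (so that Lemma~\ref{lem:worldtube-ineq} upgrades $A \neq B$ to $A \not\equiv B$ via worldtube inequality). Then set $\mathcal{A}:=\{A,B\}$ and $\mathcal{B}:=\{A,A\}$.

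Next I would verify the count--vector equality: by Axiom~\ref{ax:typed-count} we have $N_{t_0}(\mathcal{A}) = 1+1 = 2 = N_{t_0}(\mathcal{B})$, and $N_t(\mathcal{A}) = N_t(\mathcal{B}) = 0$ for $t \neq t_0$. This is the content of Lemma~\ref{lem:info-loss-phys} applied to the chosen pair, and delivers $(N_t(\mathcal{A}))_{t\in T} = (N_t(\mathcal{B}))_{t\in T}$ as required.

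To establish the separation, I would then examine the composites $A \oplus B$ and $A \oplus A$. Since $A \not\equiv B$, Theorem~\ref{thm:nonidentity} (taking $X := A$) gives $A \oplus B \not\equiv A \oplus A = X \oplus X$. Hence the two aggregates, while indistinguishable at the level of typed counts, produce composites that fail to be physically identical. This is precisely the ``counting--equality does not imply composite--equality'' separation asserted by the corollary.

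The main conceptual subtlety — and what I would flag as the only delicate step — lies in bridging $A \neq B$ (set--theoretic distinctness, which is all Axiom~\ref{ax:manytoone} literally gives) and $A \not\equiv B$ (physical non--identity, which is what Theorem~\ref{thm:nonidentity} consumes). Arithmetically everything else is routine, so the proof essentially reduces to ensuring that the witnesses from Axiom~\ref{ax:manytoone} can be chosen to satisfy the hypotheses of Lemma~\ref{lem:worldtube-ineq} (or Lemma~\ref{lem:state-discern}); once this is secured, the corollary follows immediately by chaining Lemma~\ref{lem:info-loss-phys} and Theorem~\ref{thm:nonidentity}.
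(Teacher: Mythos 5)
Your proof is correct and follows essentially the same route as the paper, which likewise proves the corollary by chaining Lemma~\ref{lem:info-loss-phys} (with the witnesses $\mathcal{A}=\{A,B\}$, $\mathcal{B}=\{A,A\}$) with Theorem~\ref{thm:nonidentity}. The one subtlety you flag---upgrading $A\neq B$ to $A\not\equiv B$---is secured for arbitrary (not just macroscopic) objects by Axiom~\ref{ax:discern} via Lemma~\ref{lem:state-discern}, so the detour through Lemma~\ref{lem:worldtube-ineq} is unnecessary.
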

\begin{proof}
This is Lemma~\ref{lem:info-loss-phys}. The lack of implication to composite
identity follows from Theorem~\ref{thm:nonidentity} whenever one aggregate is a
pair of distinct objects and the other is a double of a single object in the
same type.
\end{proof}

\paragraph{Interpretation.}
Theorem~\ref{thm:nonidentity} pinpoints the precise sense in which ``one plus
one is \emph{two ones}'' in the physical world: the composite built from two
distinct individuals cannot be physically identical to a doubled copy of any
single individual. Corollaries~\ref{cor:typed-not-id} and \ref{cor:count-vs-phys}
show why the everyday numeral ``$2$'' nevertheless appears: a typed count
observable $N_{t_0}$ sums \emph{class membership} and is blind to individuality.
Thus, equality of typed counts is a statement about coarse–grained \emph{labels},
not about physical identity of the contributing systems.

\subsection{Counting as Observable and the Exact Location of Approximation}
\label{subsec:phys-count-observable}

We now make explicit how the familiar numeral ``$2$'' arises in physics:
as the \emph{readout of a counting observable} after an explicit
\emph{classification (coarse–graining)} is chosen. This isolates, in the
physical setting, the unique step at which individuality is forgotten.

\begin{definition}[Classification and label multiset]
Fix a classification (type map) $q:\mathfrak{U}\to T$ into a discrete label set.
For a finite aggregate $\mathcal{A}=\{A_1,\dots,A_n\}\subset\mathfrak{U}$,
define its \emph{label multiset} $L_q(\mathcal{A})\in\mathcal{M}(T)$ by
\[
L_q(\mathcal{A})\ :=\ \sum_{k=1}^{n} \delta_{q(A_k)}.
\]
Equivalently, $L_q(\mathcal{A})(t)=\#\{k:\ q(A_k)=t\}$.
\end{definition}

\begin{definition}[Ideal counting observable]
For each $t\in T$, the \emph{ideal typed count} on aggregates is
\[
N_t(\mathcal{A})\ :=\ L_q(\mathcal{A})(t).
\]
The full ideal count vector is $N(\mathcal{A})=(N_t(\mathcal{A}))_{t\in T}\in\mathbb{N}^T$.
\end{definition}

\begin{proposition}[Ideal counts factor through labels]
\label{prop:factor-through-labels}
For all finite aggregates $\mathcal{A}$ and $t\in T$,
\[
N_t(\mathcal{A})\ =\ \big(\mathrm{ev}_t\circ L_q\big)(\mathcal{A}),
\]
where $\mathrm{ev}_t:\mathcal{M}(T)\to\mathbb{N}$ evaluates multiplicity at $t$.
In particular, $N=\big(\mathrm{ev}_t\big)_{t\in T}\circ L_q$.
\end{proposition}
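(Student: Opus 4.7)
The plan is to unfold definitions directly: the proposition is essentially a tautology packaging the definition of $N_t$ through the evaluation map, but writing it out carefully clarifies how the classification pipeline factors.

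First, I would fix an arbitrary finite aggregate $\mathcal{A}=\{A_1,\dots,A_n\}$ and an arbitrary type $t\in T$. By the definition of the ideal typed count, $N_t(\mathcal{A})=L_q(\mathcal{A})(t)$. Next, by the definition of $\mathrm{ev}_t$ as multiplicity evaluation on $\mathcal{M}(T)$, I have $\mathrm{ev}_t(m)=m(t)$ for every $m\in\mathcal{M}(T)$. Applying this to $m=L_q(\mathcal{A})$ immediately yields
\[
\big(\mathrm{ev}_t\circ L_q\big)(\mathcal{A})\;=\;\mathrm{ev}_t\big(L_q(\mathcal{A})\big)\;=\;L_q(\mathcal{A})(t)\;=\;N_t(\mathcal{A}),
\]
which is the stated equality.

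For transparency I would optionally expand $L_q(\mathcal{A})$ as $\sum_{k=1}^{n}\delta_{q(A_k)}$ and use pointwise evaluation of the sum of unit bags, giving $L_q(\mathcal{A})(t)=\sum_{k=1}^{n}\delta_{q(A_k)}(t)=\#\{k:q(A_k)=t\}$, which reproduces the original combinatorial description of $N_t(\mathcal{A})$ and confirms internal consistency of the two definitions.

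For the final assertion $N=(\mathrm{ev}_t)_{t\in T}\circ L_q$, I would simply observe that the identity established above holds for every $t\in T$ simultaneously; assembling the components indexed by $T$ gives the vector identity in $\mathbb{N}^T$. There is no real obstacle here, since every step is an unfolding of a definition; the only thing worth flagging is that $\mathrm{ev}_t$ is well-defined (finite support of multisets ensures $L_q(\mathcal{A})(t)\in\mathbb{N}$), which is guaranteed by $\mathcal{A}$ being a finite aggregate. The content of the proposition is structural rather than computational: it records that the counting observable on aggregates is nothing other than the label pushforward followed by coordinate evaluation, setting up the subsequent discussion of where approximation enters.
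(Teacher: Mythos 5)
Your proof is correct and follows essentially the same route as the paper's one-line argument, which likewise just unfolds the definitions to get $L_q(\mathcal{A})(t)=\sum_{k}\mathbf{1}_{\{q(A_k)=t\}}=N_t(\mathcal{A})$. Your optional expansion of $L_q(\mathcal{A})$ into unit bags and the remark on assembling the vector identity add harmless explicitness but no new content.
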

\begin{proof}
By definition $L_q(\mathcal{A})(t)=\sum_{k}\mathbf{1}_{\{q(A_k)=t\}}=N_t(\mathcal{A})$.
\end{proof}

\begin{proposition}[Additivity, permutation invariance, and disjoint union]
\label{prop:add-perm}
For all aggregates $\mathcal{A},\mathcal{B}$ and all $t\in T$:
\begin{enumerate}
    \item $N_t$ is permutation–invariant in the constituents of $\mathcal{A}$.
    \item $N_t(\mathcal{A}\uplus\mathcal{B})=N_t(\mathcal{A})+N_t(\mathcal{B})$.
    \item $N(\varnothing)=0$ and $N_t(\{A\})=\mathbf{1}_{\{q(A)=t\}}$.
\end{enumerate}
\end{proposition}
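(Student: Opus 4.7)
The plan is to reduce all three claims to the factorisation $N_t = \mathrm{ev}_t \circ L_q$ already established in Proposition~\ref{prop:factor-through-labels}, together with the observation that $L_q$ is itself a commutative–monoid homomorphism from finite aggregates (under disjoint union $\uplus$) into $\bigl(\mathcal{M}(T),+\bigr)$. Since $\mathrm{ev}_t:\mathcal{M}(T)\to\mathbb{N}$ is plainly a monoid homomorphism (evaluation at a fixed coordinate commutes with pointwise addition), every property of $N_t$ will be inherited from the monoid structure on labels.

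First I would dispatch \emph{permutation invariance}. The definition $L_q(\mathcal{A})=\sum_{k=1}^n \delta_{q(A_k)}$ is a sum in the commutative monoid $\mathcal{M}(T)$; by commutativity and associativity the value is independent of the enumeration $A_1,\dots,A_n$. Composing with $\mathrm{ev}_t$ preserves this invariance. (Equivalently, one may simply note that an aggregate is by Definition a multiset, so ``permutation of constituents'' is not even a distinction visible to $L_q$.)

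Next I would prove \emph{additivity on disjoint union}. Unpacking the definition on $\mathcal{A}\uplus\mathcal{B}$,
\[
L_q(\mathcal{A}\uplus\mathcal{B})
=\sum_{C\in\mathcal{A}\uplus\mathcal{B}}\delta_{q(C)}
=\sum_{A\in\mathcal{A}}\delta_{q(A)}+\sum_{B\in\mathcal{B}}\delta_{q(B)}
=L_q(\mathcal{A})+L_q(\mathcal{B}),
\]
so $L_q$ is a monoid homomorphism. Applying $\mathrm{ev}_t$ to both sides, which distributes over $+$ in $\mathcal{M}(T)$, yields $N_t(\mathcal{A}\uplus\mathcal{B})=N_t(\mathcal{A})+N_t(\mathcal{B})$. (This is also Axiom~\ref{ax:typed-count}/Lemma~\ref{lem:add-typed}, which may simply be cited.)

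Finally, for the \emph{base cases}: $L_q(\varnothing)$ is an empty sum, hence equals $0\in\mathcal{M}(T)$, so $N_t(\varnothing)=\mathrm{ev}_t(0)=0$ for every $t$, giving $N(\varnothing)=0$. For a singleton, $L_q(\{A\})=\delta_{q(A)}$, and therefore $N_t(\{A\})=\mathrm{ev}_t(\delta_{q(A)})=\mathbf{1}_{\{q(A)=t\}}$ by definition of $\delta$. There is no genuine obstacle in this proof; the only care needed is to make the homomorphism structure of $L_q$ explicit, since that is what forces all three properties simultaneously and shows they are not three independent facts but one structural observation viewed from three angles.
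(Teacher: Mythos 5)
Your proof is correct and is in substance the same as the paper's: the paper's (one-line) proof likewise verifies all three claims by counting occurrences of the label $t$ in $L_q(\mathcal{A})$ and $L_q(\mathcal{B})$ and using disjoint union, which is exactly the computation you spell out. Your explicit packaging of $L_q$ and $\mathrm{ev}_t$ as commutative–monoid homomorphisms is a tidy structural gloss on that same direct verification (and consistent with the paper's own homomorphism viewpoint in Lemmas~\ref{lem:hom} and~\ref{lem:chi-hom}), not a genuinely different route.
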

\begin{proof}
All statements follow by counting occurrences of the label $t$ in
$L_q(\mathcal{A})$ and $L_q(\mathcal{B})$ and using disjoint union.
\end{proof}

\begin{definition}[Forgetting map (approximation)]
The \emph{forgetting map} associated with $q$ is $\Phi_q$ sending an aggregate to
its label multiset:
\[
\Phi_q:\ \{\text{finite aggregates}\}\longrightarrow \mathcal{M}(T),\qquad
\Phi_q(\mathcal{A})\ :=\ L_q(\mathcal{A}).
\]
\end{definition}

\begin{proposition}[Non–injectivity and information loss]
\label{prop:phys-noninj}
If $q$ is not injective, then $\Phi_q$ is not injective: there exist distinct
aggregates $\mathcal{A}\neq\mathcal{B}$ with $\Phi_q(\mathcal{A})=\Phi_q(\mathcal{B})$.
Consequently, $N(\mathcal{A})=N(\mathcal{B})$ while $\mathcal{A}$ and
$\mathcal{B}$ may differ physically (e.g.\ in constituents or states).
\end{proposition}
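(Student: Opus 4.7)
The plan is to prove Proposition~\ref{prop:phys-noninj} by a direct witness construction that parallels Lemma~\ref{lem:info-loss-phys}, and then to combine Proposition~\ref{prop:factor-through-labels} with Theorem~\ref{thm:nonidentity} to establish the ``physical difference'' clause. Since $\Phi_q$ is a function between sets, non-injectivity is a pure existence claim, so the only work is to exhibit a specific pair of aggregates that collide under $\Phi_q$.

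First, I would invoke Axiom~\ref{ax:manytoone} to pick $A,B\in\mathfrak{U}$ with $A\not\equiv B$ and $q(A)=q(B)=t_0$ for some $t_0\in T$. Then I would set $\mathcal{A}:=\{A,B\}$ and $\mathcal{B}:=\{A,A\}$, noting that these are distinct as finite multisets of elements of $\mathfrak{U}$ because the multiplicity of $B$ differs ($1$ versus $0$). Computing the label multisets directly from the definition gives
\[
\Phi_q(\mathcal{A}) \;=\; \delta_{q(A)}+\delta_{q(B)} \;=\; 2\,\delta_{t_0} \;=\; \delta_{q(A)}+\delta_{q(A)} \;=\; \Phi_q(\mathcal{B}),
\]
so the same element of $\mathcal{M}(T)$ is hit by two distinct aggregates, witnessing non-injectivity of $\Phi_q$.

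Next, I would apply Proposition~\ref{prop:factor-through-labels}, which expresses $N$ as the composite of evaluation maps with $\Phi_q$; since the inputs agree under $\Phi_q$, the count vectors agree componentwise, yielding $N(\mathcal{A})=N(\mathcal{B})$. For the ``physical difference'' remark, I would point out that $\mathcal{A}\neq\mathcal{B}$ already as multisets of objects, and that the corresponding composites satisfy $A\oplus B\not\equiv A\oplus A$ by Corollary~\ref{cor:irreducible-pair} (a direct consequence of Theorem~\ref{thm:nonidentity}), confirming that the collision at the label level does not descend from any physical identity of the underlying systems.

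There is no real obstacle here: everything reduces to the previously established multiset calculation and an invocation of the Non–Identity Theorem. The only minor subtlety is to be careful that $\mathcal{A}\neq\mathcal{B}$ is asserted as multisets of \emph{objects} in $\mathfrak{U}$ (not merely as label multisets), so that the information-loss statement is genuinely about collapsing distinct physical aggregates rather than a tautology about $\Phi_q$ at the label level.
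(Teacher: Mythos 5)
Your proof is correct and follows essentially the same route as the paper's: both invoke Axiom~\ref{ax:manytoone} to obtain $A\neq B$ with $q(A)=q(B)=t_0$, use the witness pair $\mathcal{A}=\{A,B\}$, $\mathcal{B}=\{A,A\}$ colliding at $2\,\delta_{t_0}$, and derive $N(\mathcal{A})=N(\mathcal{B})$ via Proposition~\ref{prop:factor-through-labels}. Your additional appeal to Corollary~\ref{cor:irreducible-pair} to substantiate the ``may differ physically'' clause is a harmless elaboration the paper leaves implicit, not a different approach.
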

\begin{proof}
Pick $A\neq B$ with $q(A)=q(B)=t_0$ (Axiom~\ref{ax:manytoone}).
Let $\mathcal{A}=\{A,B\}$ and $\mathcal{B}=\{A,A\}$. Then
$\Phi_q(\mathcal{A})=\Phi_q(\mathcal{B})=2\,\delta_{t_0}$ while
$\mathcal{A}\neq\mathcal{B}$; hence $N$ agrees on both by
Proposition~\ref{prop:factor-through-labels}.
\end{proof}

\begin{theorem}[Emergence of the numeral \texorpdfstring{$2$}{2} as a readout]
\label{thm:two-as-readout}
Let $A,B\in\mathfrak{U}$ with $A\not\equiv B$ and $q(A)=q(B)=t_0$. Then
\[
N_{t_0}(\{A,B\})\ =\ 2,
\]
while by the Non–Identity Theorem (Theorem~\ref{thm:nonidentity}) there is no
$X$ with $A\oplus B\equiv X\oplus X$. Thus the value ``$2$'' is a property of
the \emph{typed count observable} after classification by $q$, not a statement
of physical identity of constituents.
\end{theorem}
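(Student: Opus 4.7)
The plan is to prove Theorem \ref{thm:two-as-readout} as a straightforward assembly of two results already established in the preceding subsections, with no additional machinery required. The theorem is essentially a juxtaposition: it combines a count computation with an appeal to the Non-Identity Addition Theorem, in order to isolate the sense in which ``$2$'' is a readout rather than an identity claim.

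For the count statement, I would apply the singleton and disjoint-union clauses of Proposition \ref{prop:add-perm}. Under the hypothesis $q(A)=q(B)=t_0$, the singleton rule gives $N_{t_0}(\{A\}) = N_{t_0}(\{B\}) = 1$, and additivity on disjoint union yields $N_{t_0}(\{A,B\}) = 1 + 1 = 2$. Equivalently, one can read this directly off the label multiset $\Phi_q(\{A,B\}) = 2\,\delta_{t_0}$ via Proposition \ref{prop:factor-through-labels}; this alternative presentation is worth noting because it makes the factoring through labels explicit in the body of the proof.

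For the non-identity statement I would argue by contradiction. Suppose some $X\in\mathfrak{U}$ satisfied $A\oplus B \equiv X\oplus X$. The ($\Rightarrow$) direction of Theorem \ref{thm:nonidentity} forces $A\equiv B\equiv X$, which contradicts the hypothesis $A\not\equiv B$. Hence no such $X$ exists, which is exactly the non-identity half of the conclusion.

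The main ``difficulty'' here is not technical but interpretive: the real content of the theorem is the conjunction of these two independent facts. To bring this out, I would close the proof by pointing to the factorisation of $N_{t_0}$ through $\Phi_q$ established in Proposition \ref{prop:factor-through-labels}: the value $2$ is determined entirely by the coarse-grained bag $2\,\delta_{t_0}\in\mathcal{M}(T)$ and is blind to whether the two unit contributions came from distinct $A,B$ or from a doubled $X$. By Theorem \ref{thm:nonidentity}, the composite $A\oplus B$ itself retains the constituent data that blocks any such collapse at the physical level. This is the precise sense in which ``$2$'' is the readout of the counting observable after classification, not a statement of physical identity of constituents.
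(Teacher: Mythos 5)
Your proposal is correct and matches the paper's own proof, which likewise obtains $N_{t_0}(\{A,B\})=1+1=2$ from Proposition~\ref{prop:add-perm} and cites Theorem~\ref{thm:nonidentity} for the non-identity claim (your contradiction phrasing is just the explicit form of invoking its forward direction). The added remarks on factoring through $\Phi_q$ via Proposition~\ref{prop:factor-through-labels} are a harmless interpretive elaboration, not a different route.
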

\begin{proof}
$N_{t_0}(\{A,B\})=1+1=2$ by Proposition~\ref{prop:add-perm}. The second claim
is Theorem~\ref{thm:nonidentity}.
\end{proof}

\begin{definition}[Measurement model (unbiased counting instrument)]
A \emph{typed counting instrument} for $q$ is a family of random variables
$\widehat{N}_t(\mathcal{A})$ (one for each $t\in T$ and finite aggregate
$\mathcal{A}$) such that:
\begin{enumerate}
    \item \textbf{Unbiasedness:} $\mathbb{E}[\widehat{N}_t(\mathcal{A})]=N_t(\mathcal{A})$.
    \item \textbf{Additivity in expectation:}
    $\mathbb{E}[\widehat{N}_t(\mathcal{A}\uplus\mathcal{B})]
    =\mathbb{E}[\widehat{N}_t(\mathcal{A})]+\mathbb{E}[\widehat{N}_t(\mathcal{B})]$.
    \item \textbf{Permutation invariance:} the distribution of
    $\widehat{N}_t(\mathcal{A})$ is invariant under permutations of constituents
    of $\mathcal{A}$.
\end{enumerate}
\end{definition}

\begin{proposition}[Operational interpretation]
\label{prop:operational}
For any unbiased counting instrument and any $A,B$ with $q(A)=q(B)=t_0$,
\[
\mathbb{E}\!\left[\widehat{N}_{t_0}(\{A,B\})\right]\ =\ 2,
\]
regardless of whether $A\equiv B$. Hence, even with realistic measurement noise,
the \emph{expected} readout equals the ideal typed count and does not imply
physical identity.
\end{proposition}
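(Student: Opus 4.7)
The plan is to derive the claim directly from the three defining properties of a typed counting instrument, using the ideal-count identity $N_{t_0}(\{A,B\})=2$ already established for this configuration, and then to observe that nowhere in the derivation does the relation $\equiv$ enter. So the ``regardless of whether $A\equiv B$'' clause is not a separate case to dispatch but an automatic byproduct of the bookkeeping.

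First, I would decompose the aggregate as a disjoint union $\{A,B\}=\{A\}\uplus\{B\}$ and apply additivity in expectation to get
\[
\mathbb{E}\!\left[\widehat{N}_{t_0}(\{A,B\})\right]\;=\;\mathbb{E}\!\left[\widehat{N}_{t_0}(\{A\})\right]\;+\;\mathbb{E}\!\left[\widehat{N}_{t_0}(\{B\})\right].
\]
Next, by unbiasedness each singleton expectation equals the corresponding ideal count, so the right-hand side equals $N_{t_0}(\{A\})+N_{t_0}(\{B\})$. By Proposition~\ref{prop:add-perm}(3), $N_{t_0}(\{A\})=\mathbf{1}_{\{q(A)=t_0\}}$ and likewise for $B$, and the hypothesis $q(A)=q(B)=t_0$ makes both indicators equal to $1$. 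Hence the total is $1+1=2$. As a consistency check I would note that this also follows in one step from unbiasedness alone together with Theorem~\ref{thm:two-as-readout}, but presenting the additive form makes clear that the instrument respects the monoidal structure of aggregation.

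The final paragraph of the proof would emphasize the conceptual payoff: by Proposition~\ref{prop:factor-through-labels}, $N_{t_0}$ factors through the label multiset $L_q$, which depends only on the values of $q$ on the constituents and never on their physical identities. Therefore the entire computation is blind to whether $A\equiv B$, and the expected readout of $2$ persists even when $A\not\equiv B$. Combined with Theorem~\ref{thm:nonidentity}, this reinforces the central thesis: the numeral emerges from classification and additive bookkeeping, not from any physical assertion of sameness.

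There is essentially no mathematical obstacle here; the result is a three-line corollary of the instrument axioms. The only delicate point is expositional rather than technical, namely to present the ``regardless of $\equiv$'' clause not as an extra hypothesis requiring verification but as a structural consequence of the fact that both unbiasedness and additivity are phrased purely in terms of $N_{t_0}$ and $q$. Getting that framing right, so the proposition reads as the operational counterpart of the separation established in Corollary~\ref{cor:typed-not-id}, is the main thing to attend to.
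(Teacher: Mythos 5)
Your proof is correct, and it takes a mildly but genuinely different route from the paper's. The paper's own proof is precisely the one-step argument you relegate to a consistency check: apply unbiasedness directly to the aggregate $\mathcal{A}=\{A,B\}$, so that $\mathbb{E}\bigl[\widehat{N}_{t_0}(\{A,B\})\bigr]=N_{t_0}(\{A,B\})=2$, the ideal count $2$ being already available from Proposition~\ref{prop:add-perm} and Theorem~\ref{thm:two-as-readout}. Your primary derivation instead decomposes $\{A,B\}=\{A\}\uplus\{B\}$, uses additivity in expectation, then unbiasedness on singletons, then Proposition~\ref{prop:add-perm}(3) to evaluate the indicators. Both are valid; the difference is which instrument axiom carries the load. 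The paper's version needs only axiom (1) (unbiasedness on arbitrary aggregates) and is maximally short, making plain that the proposition is just the definition of unbiasedness evaluated at a pair whose ideal count is $2$. Your version buys a slightly sharper observation: unbiasedness restricted to \emph{singletons}, combined with additivity in expectation, already forces the conclusion, so the instrument's behaviour on pairs is determined by its behaviour on single objects plus the monoidal structure. Your closing point---that the whole computation factors through the label multiset $L_q$ (Proposition~\ref{prop:factor-through-labels}) and therefore never consults $\equiv$---is exactly the paper's intended reading of the ``regardless of whether $A\equiv B$'' clause, correctly presented as a structural byproduct rather than a case split.
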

\begin{proof}
Immediate from unbiasedness with $\mathcal{A}=\{A,B\}$.
\end{proof}

\begin{remark}[Data–processing viewpoint]
The map $\Phi_q$ is a deterministic channel from aggregates to label multisets,
and $N$ is a further linear map to numerals. Any reasonable information measure
$J$ that is monotone under many–to–one channels (e.g.\ Shannon information on
a preparation ensemble) obeys the data–processing inequality
$J(N\circ \Phi_q(\mathcal{A}))\le J(\mathcal{A})$, with strict inequality whenever
$q$ identifies distinct constituents. Thus, the emergence of the numeral is
precisely an \emph{information–decreasing} operation.
\end{remark}

\paragraph{Conclusion of Section~\ref{subsec:phys-count-observable}.}
The path from concrete individuals to the numeral ``$2$'' is:
\[
\underbrace{\text{aggregate of labelled constituents}}_{\text{preserves identity}}
\ \xrightarrow{\ \Phi_q\ }\ 
\underbrace{\text{label multiset}}_{\text{forgets who is who}}
\ \xrightarrow{\ \mathrm{ev}_t\ }\ 
\underbrace{\text{typed numeral}}_{\text{count readout}}.
\]
Every approximation step is explicit, and the sole loss of individuality occurs
at the classification/label stage $\Phi_q$; no physical identity is implied by
the resulting numeral.

\subsection{Conclusion of Physical Proof}
\label{subsec:phys-conclusion}

We summarise the physicist's proof and its implications for the use of numerals
in applied contexts. The central points established in
Sections~\ref{subsec:phys-axioms}--\ref{subsec:phys-count-observable} are:

\begin{itemize}
    \item \textbf{Physical addition preserves individuality.}
    The composite operation $\oplus$ constructs a system that retains its
    labelled constituents (Axiom~\ref{ax:constituent-id}); equality of composites
    entails equality of constituent multisets and of marginals
    (Lemma~\ref{lem:cons-pres}). Distinct macroscopic objects are physically
    discernible (Lemmas~\ref{lem:spacetime-discern}--\ref{lem:state-discern}),
    hence their composite cannot collapse individuality.
    \item \textbf{``Two identicals'' only exists in models, never in reality.}
    The Non--Identity Theorem (Theorem~\ref{thm:nonidentity}) shows that
    $A\oplus B \equiv X\oplus X$ if and only if $A\equiv B\equiv X$; therefore,
    for distinct $A\not\equiv B$ there is no $X$ such that the pair equals a
    doubled copy. The numeral ``$2$'' arises only as the readout of a
    \emph{typed count observable} $N_t$ after an explicit classification $q$
    (Theorem~\ref{thm:two-as-readout}), and its computation proceeds via a
    many--to--one forgetting map $\Phi_q$ that discards individuality
    (Proposition~\ref{prop:phys-noninj}).
\end{itemize}

\begin{quote}
\emph{Thus, in the physical world, one plus one is \textbf{two ones}. The
everyday equation ``$1+1=2$'' is a statement about a \textbf{counting observable}
after \textbf{coarse--graining}, not about identity of the contributing
individuals.}
\end{quote}

This completes the physical half of the thesis. In concert with the
mathematician's construction (Section~3), it isolates the sole approximation
step used in practice: the intentional choice of classification $q$ that
forgets ``who is who'' in order to obtain numerals.

\section{Comparative Analysis}

\subsection{Mathematics vs.\ Physics Approach}
\label{subsec:comp-math-phys}

\paragraph{Common question.}
Both treatments answer the same operational question:
\emph{What exactly is asserted when we move from two concrete individuals to the numeral ``$2$''?}
They diverge in starting point, primitives, and what counts as a proof.

\begin{itemize}
    \item \textbf{Mathematics: abstraction first, explicit equivalence.}
    The mathematical route begins with identity--preserving \emph{aggregation}
    (multisets over $U$) and introduces numerals only via declared
    coarse--graining.
    The core result is the \emph{Two--Ones Theorem} (Thm.~\ref{thm:two-ones}):
    $\delta_a+\delta_b$ encodes \emph{two ones} with $a\neq b$ preserved.
    A type map $q:U\to T$ induces the pushforward $q_*$ and, uniquely, the
    typed counting homomorphism $N_S=\chi_S\circ q_*$ (Thm.~\ref{thm:typed-unique}),
    which yields the numeral $2$ only after forgetting who is who.
    Information loss is pinpointed by non--injectivity of $q_*$ (Prop.~\ref{prop:info-loss}).
    \item \textbf{Physics: reality first, impossibility of perfect equivalence.}
    The physical route starts from worldtubes, states, and observables for
    concrete systems and proves that \emph{perfect} equivalence of distinct
    macroscopic objects is unattainable in practice
    (Lemmas~\ref{lem:spacetime-discern}--\ref{lem:state-discern}).
    The \emph{Non--Identity Addition Theorem} (Thm.~\ref{thm:nonidentity})
    shows $A\oplus B \equiv X\oplus X$ iff $A\equiv B\equiv X$, hence a pair of
    distinct constituents cannot equal a doubled copy.
    The numeral $2$ arises as the readout of a typed count observable after an
    explicit classification $q$ (Thm.~\ref{thm:two-as-readout}), via a
    many--to--one forgetting map that discards individuality
    (Prop.~\ref{prop:phys-noninj}).
\end{itemize}

\paragraph{Synthesis.}
Mathematics provides the \emph{structure}: aggregation $\to$ coarse--graining
$\to$ unique counting homomorphism. Physics provides the \emph{ontology}:
distinct systems remain discernible; composites preserve constituents; counting
is an observable defined only after a classification is chosen.
Both routes converge on the same verdict:
\[
\text{in reality}\quad \delta_a+\delta_b\ \text{is two ones, while}\quad
(|\cdot|\circ q_*)(\delta_a+\delta_b)=2
\]
and, correspondingly,
\[
A\oplus B\ \text{is a composite of two individuals, while}\quad
N_{t_0}(\{A,B\})=2
\]
for any classification that labels both as $t_0$.
The numeral $2$ is therefore a feature of the \emph{counting map after
classification}, not a collapse of physical or mathematical individuality.

\subsection{Intersection}
\label{subsec:intersection}

Both routes isolate the \emph{same mechanism} behind the appearance of the numeral:
\emph{coarse–graining (classification) followed by a counting homomorphism}.

\begin{itemize}
    \item \textbf{Mathematics:} the pipeline
    \[
    \mathcal{M}(U)\ \xrightarrow{\ q_*\ }\ \mathcal{M}(T)\ \xrightarrow{\ |\cdot|\ }\ \mathbb{N}
    \]
    produces the numeral (Cor.~\ref{cor:everyday}, Cor.~\ref{cor:terminal}); the non–injective step is $q_*$ (Prop.~\ref{prop:info-loss}).
    \item \textbf{Physics:} the pipeline
    \[
    \{\text{finite aggregates}\}\ \xrightarrow{\ \Phi_q\ }\ \mathcal{M}(T)\ \xrightarrow{\ \chi_S\ \text{or}\ \mathrm{ev}_t\ }\ \mathbb{N}
    \]
    produces the readout (Thm.~\ref{thm:two-as-readout}); the non–injective step is $\Phi_q$ (Prop.~\ref{prop:phys-noninj}).
\end{itemize}

In both pictures, ``$1+1=2$'' for physical objects is not an identity statement about
constituents but the value of a counting map \emph{after} a many–to–one classification.
Formally, the last arrow is a commutative–monoid homomorphism; the loss of individuality
occurs exclusively at the classification stage.

\subsection{Implications}
\label{subsec:implications}

\paragraph{Philosophy of mathematics (realism vs.\ formalism).}
The analysis favours a \emph{formal} reading of numerals in applications:
``$2$'' names the output of a homomorphism on a coarse–grained domain, not a
mind–independent identity of indistinguishable units. Realist commitments to
numerical truths remain intact \emph{inside} the formal system (e.g.\ Peano arithmetic),
while the bridge to the world is explicitly mediated by modeling choices ($q$).

\paragraph{Modelling in science.}
All scientific counting is \emph{typed}. Declaring a classification $q$ is a
modeling act that fixes what can be added (Section~\ref{subsec:typed-counting}).
Successive coarsenings/refinements compose functorially (Eq.~\eqref{eq:functoriality}),
so counts are reproducible under stated typologies. Disagreements about ``how many''
are, at root, disagreements about $q$.

\paragraph{Approximation in applied arithmetic.}
The only approximation is the deliberate, information–decreasing map ($q_*$ or $\Phi_q$).
After that step, counting is exact (unique monoid homomorphism; Thm.~\ref{thm:typed-unique},
Prop.~\ref{prop:add-perm}). Thus, ``$1+1=2$'' in practice is precisely accurate
\emph{relative to the chosen coarse–graining}, and only as accurate as that choice is
appropriate to the task.

\section{Broader Implications}

\subsection{Information Theory Perspective}
\label{subsec:info-theory}

\begin{itemize}
    \item \textbf{Additivity for distinct states.}
    For a composite prepared from independent sources, standard information
    measures are additive: for Shannon information on discrete ensembles,
    \[
        I(A,B)\;=\;I(A)+I(B)-I_{\mathrm{mut}}(A;B),
    \]
    with $I_{\mathrm{mut}}(A;B)=0$ when the preparations are independent.
    Thus, two distinct preparations typically carry strictly more information
    than either alone. This mirrors our aggregation results: preserving
    individuality preserves information.

    \item \textbf{Information degeneracy under identification.}
    When distinct individuals are mapped by the classification $q$ to the same
    label, the \emph{label ensemble} loses individuality:
    the many--to--one channel $\Phi_q$ obeys the data–processing inequality,
    \[
        J\big(N\circ \Phi_q(\mathcal{A})\big)\ \le\ J\big(\Phi_q(\mathcal{A})\big)\ \le\ J(\mathcal{A}),
    \]
    for any monotone information functional $J$.
    Identical abstract tokens (pure equivalence classes) are
    \emph{information–degenerate}: once individuality is forgotten, only the
    multiplicity (e.g.\ $\log$–multiplicity in a combinatorial code) remains.
\end{itemize}

\subsection{Epistemology and Measurement}
\label{subsec:epistemology}

\begin{itemize}
    \item \textbf{What counting tells us---and what it hides.}
    Counting reports the value of a typed observable (Sections~\ref{subsec:typed-counting},
    \ref{subsec:phys-count-observable}). It tells us \emph{how many} instances fall
    into declared bins, but it is blind to within–bin distinctions. The passage
    \[
       \text{aggregate}\ \xrightarrow{\ \Phi_q\ }\ \text{labels}\ \xrightarrow{\ N\ }\ \text{numerals}
    \]
    is exact given $q$, yet it \emph{hides} all information orthogonal to $q$.

    \item \textbf{The observer’s role in defining “sameness”.}
    ``Sameness'' is fixed by the modeling choice $q$ (types, bins, tolerances).
    Different observers or instruments instantiate different $q$’s; by
    functoriality (Eq.~\eqref{eq:functoriality}), coherent changes of $q$ give coherent
    changes of counts. Disagreement about counts is therefore often a
    \emph{semantic} dispute about classification, not a numerical error.
\end{itemize}

\subsection{Possible Extensions}
\label{subsec:extensions}

\begin{itemize}
    \item \textbf{Quantum indistinguishability and nuanced “sameness”.}
    In many–body quantum theory, (anti)symmetrisation makes \emph{labels}
    unphysical for identical particles, yet \emph{mode} or \emph{state–space}
    structure remains. The number operator $\hat N$ has integer eigenvalues,
    so typed counts persist (``how many quanta in this mode''). Our framework
    specialises by taking $q$ to label modes rather than individuals; the moral
    is unchanged: numerals arise after declaring what is being counted.

    \item \textbf{Implications for AI perception, object detection, and categorisation.}
    Modern perception pipelines implement exactly the maps analysed here:
    detection $\to$ classification $\to$ counting. The detector proposes
    aggregates; the classifier implements $q$; the counter computes $N$.
    Performance hinges on the choice of $q$ (label set, hierarchy) and on
    the information loss it entails (merging fine classes into coarse ones).
    Our results formalise why confusion matrices (errors in $q$) directly
    translate into count discrepancies.
\end{itemize}

\section{Responses to Potential Objections}

\subsection{``Isn't This Just Pedantic?''}
One might argue that everyone understands ``$1+1=2$'' involves approximation when applied to physical objects. However, making this approximation \emph{explicit} has important consequences:
\begin{itemize}
    \item It clarifies disagreements about counts as disagreements about classification schemes
    \item It quantifies information loss in data aggregation
    \item It provides a principled framework for multi-scale modeling in science
\end{itemize}

\subsection{``Does This Affect Practical Mathematics?''}
No. Within pure mathematics, $1+1=2$ remains a theorem (see Appendix A). Our analysis concerns only the \emph{application} of arithmetic to physical objects. Engineers and scientists already implicitly perform the classification step we make explicit---our contribution is to formalize this universal practice.

\subsection{``What About Digital Computation?''}
Digital systems implement counting through explicit type systems. A computer's ``integer'' type is precisely a classification that treats certain bit patterns as equivalent representatives of abstract numbers. Our framework describes exactly what digital systems do: map diverse physical states (transistor voltages) through classification functions to discrete types.

\subsection{``Is This Compatible with Platonism?''}
Mathematical Platonists can maintain that abstract numbers exist independently while accepting our analysis of how counting \emph{connects} to physical reality. The classification function $q$ serves as the bridge between Platonic numbers and physical aggregates---a bridge that necessarily involves information loss.

\section{Conclusion}

This work traced, with parallel mathematical and physical formalisms, the precise
meaning of moving from two concrete individuals to the numeral ``$2$''. The analysis
vindicates the childhood intuition that aggregation preserves individuality, while
numerals arise only after an explicit act of classification and counting.

\begin{itemize}
    \item \textbf{Core thesis.} In reality, one plus one is \emph{two ones}.
    \item \textbf{Mathematical lens.} Mathematical addition on natural numbers operates on
    \emph{equivalence classes} (via coarse--graining and a counting homomorphism), not on raw
    individuals.
    \item \textbf{Counting as approximation.} Every act of counting is an \emph{act of approximation}:
    a deliberate, many--to--one identification that forgets ``who is who'' in order to obtain numerals.
    \item \textbf{Philosophical and scientific foresight.} Thakur Anukulchandra’s childhood insight
    anticipated a deep truth: aggregation and counting are distinct operations, and the familiar
    equation ``$1+1=2$'' describes the latter \emph{after} coarse--graining, not the former.
\end{itemize}

\appendix
\section*{Appendices}
\addcontentsline{toc}{section}{Appendices}

\subsection*{A.\ Peano arithmetic proof that \texorpdfstring{$1+1=2$}{1+1=2}}

We recall the Peano axioms for $(\mathbb{N},0,S)$ and define addition recursively.
\begin{enumerate}
    \item $0\in\mathbb{N}$; if $n\in\mathbb{N}$ then $S(n)\in\mathbb{N}$.
    \item $S$ is injective and $0$ is not a successor.
    \item \textbf{Induction:} If $P(0)$ holds and $P(n)\!\Rightarrow\!P(S(n))$ for all $n$, then $P(n)$ holds for all $n\in\mathbb{N}$.
\end{enumerate}
\textbf{Definitions.} Set $1:=S(0)$, $2:=S(1)=S(S(0))$. Define $+$ by
\[
n+0:=n,\qquad n+S(m):=S(n+m).
\]
\textbf{Lemma.} $n+0=n$ for all $n$ (by definition). \quad
\textbf{Theorem.} $1+1=2$.
\[
1+1\;=\;1+S(0)\;=\;S(1+0)\;=\;S(1)\;=\;S(S(0))\;=\;2.\ \ \square
\]
This establishes that within standard arithmetic the string ``$1+1=2$'' is a theorem derived from axioms and definitions.

\subsection*{B.\ Information-theoretic derivations}

Let $\mathcal{A}$ be a finite aggregate of individuals and $q$ a classification. The \emph{forgetting map} $\Phi_q$ sends $\mathcal{A}$ to its label multiset, and $N$ (or evaluation maps $\mathrm{ev}_t$) returns numerals.

\paragraph{Additivity for independent sources.}
If $A$ and $B$ are independent preparations with discrete distributions, then
\[
H(A,B)=H(A)+H(B),\qquad I(A;B)=0,
\]
so two distinct preparations carry strictly more Shannon information than either alone.

\paragraph{Data–processing and information loss under classification.}
Let $X$ encode the individual–level description of an aggregate and $Y:=\Phi_q(X)$ its labels. For any $f$ (e.g.\ a count map),
\[
I(X;Z)\ \ge\ I(Y;Z)\ \ge\ I(f(Y);Z),
\]
for any auxiliary variable $Z$; in particular,
\[
H(Y)\ \le\ H(X),\qquad H(f(Y))\ \le\ H(Y).
\]
Thus many–to–one classification and subsequent counting are information–decreasing. Equality holds iff $\Phi_q$ (or $f$) is injective on the support of $X$.

\paragraph{Distinct vs.\ identical abstractions.}
If $A\neq B$ but $q(A)=q(B)$, then the \emph{label ensemble} cannot distinguish them:
$H\big(\Phi_q(\{A,B\})\big)=H\big(\Phi_q(\{A,A\})\big)$, even though at the individual level $H(\{A,B\})>H(\{A,A\})$ under any measure that credits individuality. Identical abstract tokens are \emph{information–degenerate}: only multiplicity remains.

\subsection*{C.\ Quantum mode counting vs.\ labelled object counting}

In many–body quantum theory, individuality of identical particles is not a physical observable; instead, counting is performed over \emph{modes}.

\paragraph{Fock space and number operators.}
Let $\{a_k,a_k^\dagger\}$ be annihilation/creation operators for modes $k$ with (anti)commutation relations
\[
[a_k,a_{k'}^\dagger]_{\mp}\ =\ \delta_{kk'},\qquad [a_k,a_{k'}]_{\mp}=0,\qquad [a_k^\dagger,a_{k'}^\dagger]_{\mp}=0,
\]
(upper sign for bosons, lower for fermions). The number operator in mode $k$ is
\[
\hat N_k:=a_k^\dagger a_k,\qquad \hat N_k\ket{\dots,n_k,\dots}=n_k\ket{\dots,n_k,\dots},
\]
with $n_k\in\mathbb{N}$ (bosons) and $n_k\in\{0,1\}$ (fermions).

\paragraph{Typed (mode) counts.}
For a set of modes $S$, the typed count operator is
\[
\hat N_S\ :=\ \sum_{k\in S}\hat N_k.
\]
Expectations add linearly: $\langle \hat N_S\rangle=\sum_{k\in S}\langle \hat N_k\rangle$.
Thus the quantum notion of ``how many'' is a \emph{mode} count—formally the same role as a type map $q$ that labels modes. There are no particle labels in the state; indistinguishability is built into the (anti)symmetrised Fock space, yet counting still proceeds by an explicit declaration of \emph{what} is being counted (which modes $S$).

\paragraph{Bridge to the main text.}
Our classical, identity–preserving aggregation corresponds to labelled individuals; the quantum case corresponds to unlabeled quanta with labelled modes. In both, numerals arise from a counting map applied \emph{after} a declaration of types (modes $S$ or classical labels $T$), and neither framework requires (or supplies) identity of concrete individuals.

\end{document}